\newtheorem*{theorem*}{Theorem}
\newtheorem{lemma}{Lemma}
\newtheorem*{lemma*}{Lemma}
\newtheorem{definition}{Definition}
\newenvironment{customthm}[1]
{\innercustomthm}
{\endinnercustomthm}
\newenvironment{customdef}[1]
{\innercustomdef}
{\endinnercustomdef}
\DeclareMathOperator*{\Tr}{Tr}
\DeclareMathOperator*{\dist}{dist}
\DeclareRobustCommand{\openzero}{\leavevmode\hbox{0\kern-.55em0}}
\newcommand{\tr}[0]{\textnormal{Tr}}
\newcommand{\nay}[1]{{#1}}
\newcommand{\snote}[1]{{{ #1}}}
\newcommand{\prlsection}[1]{{\em {#1}.---~}}
\newcommand{\commentout}[1]{{{}}}
\begin{document}
	\title{Fundamental limitations to local energy extraction in quantum systems}
	\author{\'Alvaro M. Alhambra}
	\affiliation{Perimeter Institute for Theoretical Physics, Waterloo, ON N2L 2Y5, Canada}
	\email[e-mail address: ]{aalhambra@perimeterinstitute.ca}
	\author{Georgios Styliaris}
\affiliation{Department of Physics and Astronomy, and Center for Quantum Information
	Science and Technology, University of Southern California, Los Angeles,
	California 90089-0484}
	 \author{Nayeli A. Rodr\'{i}guez-Briones}
    \affiliation{Institute for Quantum Computing, University of Waterloo, Waterloo, Ontario, N2L 3G1, Canada}
   \affiliation{Department of Physics \& Astronomy, University of Waterloo, Waterloo, Ontario, N2L 3G1, Canada}
    \affiliation{Perimeter Institute for Theoretical Physics, Waterloo, ON N2L 2Y5, Canada}
	\author{Jamie Sikora}
	\affiliation{Perimeter Institute for Theoretical Physics, Waterloo, ON N2L 2Y5, Canada}
	\author{Eduardo Mart\'{i}n-Mart\'{i}nez}
\affiliation{Department of Applied Mathematics, University of Waterloo, Waterloo, Ontario, N2L 3G1, Canada}
\affiliation{Institute for Quantum Computing, University of Waterloo, Waterloo, Ontario, N2L 3G1, Canada}
\affiliation{Perimeter Institute for Theoretical Physics, Waterloo, ON N2L 2Y5, Canada}

	\begin{abstract}
	We examine when it is possible to locally extract energy from a bipartite quantum system in the presence of strong coupling and entanglement, a task which is expected to be restricted by entanglement in the low-energy eigenstates. 
	%
	We fully characterize this distinct notion of ``passivity'' by finding necessary and sufficient conditions for such extraction to be impossible, using techniques from semidefinite programming. This is the first time in which such techniques are used in the context of energy extraction, which opens a way of exploring further kinds of passivity in quantum thermodynamics.
    %
    %
    %
    We also significantly strengthen a previous result of Frey et al., 
    by showing a physically relevant quantitative bound on the threshold temperature at which this passivity appears.
    Furthermore, we show how this no-go result also holds for thermal states in the thermodynamic limit, provided that the spatial correlations decay sufficiently fast, and we give numerical examples.

	\end{abstract}
	
	\maketitle
	
\prlsection{Introduction} In the macroscopic regime, in which thermodynamic systems typically exchange energy via weak interactions, the possible flows of energy between them are easily understood in terms of the usual laws of thermodynamics. These laws, however, may become less relevant for systems where the fluctuations 
	and the particulars of the interaction between the micro-constituents are important. 
	Moreover, in
	the microscopic regime
	quantum effects due to e.g. coherence or entanglement appear, and a natural question arises: how do those effects alter the flows of energy in and out of the system?
	
	For the task of extracting energy locally from a bipartite 
	system, one could expect the following: if the low-energy eigenstates of the system display entanglement, there are limitations when trying to get closer to them only by means of local maps (since one cannot approach entangled states with local operations). While it could be possible to decrease the energy of the system up to some mixture of those low-energy eigenstates, trying to drive the system to a lower energy state can correspond to increasing the correlations in the system beyond what is possible via local operations alone.
	
Inspired by this intuition, here we focus on the problem of cooling interacting multipartite systems to which only local access to a single subsystem is granted.
We explore the most general type of local access to quantum systems, which is given by the CPTP maps \cite{nielsen2002quantum}, making our results relevant for any physical platform in which the subsystems are spatially separated.

	This problem was first studied in Frey et al.~\cite{frey2014strong}, who gave a set of sufficient conditions for the impossibility of energy-yielding via arbitrary local operations. They called this phenomenon strong local passivity (which we refer to here as CP-local passivity), and showed that having a non-degenerate ground state with full Schmidt rank is a \emph{sufficient} condition for the system to exhibit it, given a large enough population in the ground state. Here, we build on their results in two ways: $i)$ we find \emph{necessary and sufficient} conditions for this energy extraction to be impossible and $ii)$ we strengthen the set of physically motivated sufficient conditions found in \cite{frey2014strong}, by finding explicit bounds for the ground state population and critical temperature for which the system displays CP-local passivity. We also prove that these sufficient conditions hold for systems of arbitrary size provided that the spatial correlations are weak, thus extending the presence of CP-local passivity to strongly-coupled heat baths in the thermodynamic limit. Furthermore, we highlight the relevance of the necessary and sufficient conditions we find by constructing examples where none of the sufficient conditions are met. 
	
	%
	We also show that this effect of CP-local passivity, unlike the usual notion of passivity, should only be of fundamental relevance in quantum scenarios. In states without coherence or entanglement, it can only happen if the support of the states is fine-tuned and/or the Hamiltonian is sufficiently degenerate, which constitute very strong restrictions.
	
	  
	  
	 \prlsection{Setting} Let $\mathcal{H}_A \otimes \mathcal{H}_B$ be the Hilbert space associated with quantum systems $A$ and $B$, with global Hamiltonian $H_{AB}$. Given a state $\rho_{AB}$, the maximum extractable energy under a local map on $A$ is
	%
	%
	\begin{align}\label{eq:sdp1}
	\Delta E_{(A)B}&= \min_{\mathcal{E}_A} 	\Delta E_{(A)B}^{\mathcal{E}_A} \\&\coloneqq \min_{\mathcal{E}_A} \tr[H_{AB}(\mathcal{E}_A\otimes \mathcal{I}_B)\rho_{AB}]-\tr[H_{AB}\rho_{AB}],\nonumber
	\end{align}
	%
    %
	%
	%
	where $\mathcal{I}_B$ is the identity channel on $B$, and the optimization is over the whole set of CPTP maps on $A$. 
	The above optimization can be easily written as a   \emph{semidefinite program} (see \cite{boyd2004convex,watrous2018theory} for introductory references to the subject). 
	\snote{Therefore, it is very practical to calculate $\Delta E_{(A)B}$ and to find the CPTP map which minimizes the energy. Moreover, we see that}
	energy cannot be extracted when this quantity is zero, which motivates the following definition.

	
	\begin{definition}\label{def:SLP}
		[CP-local passivity] The pair $\{\rho_{AB},H_{AB}\}$ is CP-local passive with respect to subsystem $A$ if and only if
		\begin{equation}
		\Delta E_{(A)B}= 	\Delta E_{(A)B}^{\mathcal{I}_A} = 0.
		\end{equation}
	\end{definition}
	That is, a system is CP-local passive if the best local strategy for extracting energy (as measured by the global Hamiltonian $H_{AB}$) is to act trivially on it. The word \emph{passive} is used here in analogy to the commonly known passive states \cite{lenard1978thermodynamical}, from which energy cannot be extracted under unitary maps. 
Throughout, we assume that the time evolution given by the Hamiltonian $H_{AB}$ does not play a role. This means that this setting applies to situations in which the local actions happen quickly, in the same spirit as that of fast local quenches or pulses in other quantum thermodynamic settings \cite{gallego2014thermal,perarnau2018strong}.

Let us now outline how this might be possible. 
First, let us rewrite the term corresponding to the average energy of the system after applying a local map, as follows:
		\begin{equation} \label{CJeqn}
		\tr[H_{AB}(\mathcal{E}_A\otimes \mathcal{I}_B)\rho_{AB}] = \tr[C_{AA'} E_{A A^{'}}]. 
		\end{equation} 
where
$E_{AA'}$ is the Choi-Jamio\l kowski operator for an arbitrary channel ${\mathcal{E}_A : A \to A'}$, and $C_{AA'} \in \mathcal{H}_{A} \otimes \mathcal{H}_{A'}$ the Hermitian operator $C_{AA'}\equiv \tr_B[\rho_{AB}^{\Gamma_A}H_{A'B}]$, with $\rho_{AB}^{\Gamma_A}$ the partial transpose on $A$ \footnote{The Choi-Jamio\l kowski operator of a quantum channel $\mathcal{E}_{A\rightarrow A}(\cdot)$ is defined as $E=d_A \mathcal{E}_{A\rightarrow A} \otimes \mathcal{I} \ket{\Phi}\bra{\Phi} $, the result of applying it to an un-normalized maximally entangled state $d_A \ket{\Phi}\bra{\Phi}=\sum_{i,j} \ket{i_A}\bra{j_A}\otimes \ket{i_{A'}}\bra{j_{A'}}$ on the Hilbert space of $A$ and a copy $A'$. Note that the partial transpose is with respect to the same basis as the one chosen for the Choi-Jamio\l kowski operator.}.
		
Let us now assume that CP-local passivity holds, such that for all $E_{AA'}$ the energy of the system does not decrease after the local action:
	\begin{align}
		\!\!\!\!\! \tr[C_{AA'} E_{A A^{'}}]
		& \ge  \tr[H_{AB}\rho_{AB}],
		\end{align}
We can rewrite the right hand side, using the fact that $E_{AA'}$ satisfies $\tr_{A'}[E_{AA'}]=\mathbb{I}_{A}$, and defining  $d_A\ket{\Phi}\bra{\Phi}$ as the 
		Choi-Jamio\l kowski operator for the identity channel, as
		\begin{align}
		\tr[H_{AB}\rho_{AB}]&=\tr \left[d_A\ket{\Phi}\bra{\Phi} C_{AA'} \right] \\ &=\tr_A \left[ \tr_{A'}[ d_A\ket{\Phi}\bra{\Phi} C_{AA'}] \tr_{A'}[E_{AA'}]\right] \nonumber \\ &=
		\tr[(\tr_{A'}[d_A\ket{\Phi}\bra{\Phi} C_{AA'}]\otimes \mathbb{I}_{A'})E_{AA'}].
	\nonumber
		\end{align}
		
Since this holds for all $E_{AA'}$, this suggests that CP-local passivity will hold whenever the following operator inequality is true 
    	\begin{equation}\label{eq:condition}
		C_{AA'} \ge \tr_{A'}[d_A\ket{\Phi}\bra{\Phi} C_{AA'}]\otimes \mathbb{I}_{A'}.   
		\end{equation}

	
	
	\prlsection{Complete conditions} The previous inequality in fact gives the necessary and sufficient condition. This constitutes our first main result:
	%
    %
	\begin{restatable}{thm}{mainth}\label{th:main1}
		The pair $\{\rho_{AB},H_{AB} \}$ is CP-local passive with respect to subsystem $A$ if and only if $\tr_{A'}[d_A\ket{\Phi}\bra{\Phi} C_{AA'}]$ is Hermitian and 
		\begin{equation}\label{eq:condition}
		C_{AA'}-\tr_{A'}[d_A\ket{\Phi}\bra{\Phi} C_{AA'}]\otimes \mathbb{I}_{A'} \ge 0,    
		\end{equation}
		where \nay{ $\mathcal{H}_{A'}$ is a copy of the Hilbert space $\mathcal{H}_{A}$,
	} $C_{AA'} \in \mathcal{H}_{A} \otimes \mathcal{H}_{A'}$  is a Hermitian operator defined as $C_{AA'}\equiv \tr_B[\rho_{AB}^{\Gamma_A}H_{A'B}]$, with $\rho_{AB}^{\Gamma_A}$ the partial transpose on $A$, and $d_A \ket{\Phi}\bra{\Phi}$ the (maximally entangled) Choi-Jamio\l kowski operator of the identity channel. 
	\end{restatable}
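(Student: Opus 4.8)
The plan is to recognize the optimization in \eqref{eq:sdp1} as a semidefinite program in the Choi operator and to read off the stated conditions as its primal--dual optimality certificate. Writing $D_A \coloneqq \tr_{A'}[d_A\ket{\Phi}\bra{\Phi}\,C_{AA'}]$ and $M_{AA'}\coloneqq C_{AA'}-D_A\otimes\mathbb{I}_{A'}$, the manipulation preceding the theorem already shows that $\tr[C_{AA'}E_{AA'}]-\tr[H_{AB}\rho_{AB}]=\tr[M_{AA'}E_{AA'}]$ for every Choi operator, i.e.\ every $E_{AA'}\ge 0$ with $\tr_{A'}[E_{AA'}]=\mathbb{I}_A$. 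Hence CP-local passivity is equivalent to the statement that the identity channel, with Choi operator $E^{\ast} = d_A\ket{\Phi}\bra{\Phi}=\ket{\Omega}\bra{\Omega}$ (where $\ket{\Omega}=\sum_i\ket{i_Ai_{A'}}$), minimizes $\tr[C_{AA'}E_{AA'}]$ over all feasible $E_{AA'}$, equivalently $\tr[M_{AA'}E_{AA'}]\ge 0$ for all feasible $E_{AA'}$.

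The forward (sufficiency) direction is then immediate: if $D_A$ is Hermitian and $M_{AA'}\ge 0$, then $M_{AA'}$ is a Hermitian positive-semidefinite operator, so $\tr[M_{AA'}E_{AA'}]\ge 0$ for every $E_{AA'}\ge 0$, and in particular for all feasible ones; the identity is therefore optimal and the pair is CP-local passive. No duality is needed for this implication.

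For the converse I would invoke semidefinite duality. The dual program is to maximize $\tr[Y_A]$ over Hermitian $Y_A$ subject to $C_{AA'}-Y_A\otimes\mathbb{I}_{A'}\ge 0$. Since $E_{AA'}=\mathbb{I}_{AA'}/d_A$ is strictly feasible (it is positive definite with the correct marginal), Slater's condition holds, so strong duality is satisfied and the dual optimum is attained at some Hermitian $Y^{\ast}$ with $\tr[Y^{\ast}]=\tr[H_{AB}\rho_{AB}]$ and $C_{AA'}-Y^{\ast}\otimes\mathbb{I}_{A'}\ge 0$. Assuming CP-local passivity, $E^{\ast}$ is primal optimal and there is no duality gap, so complementary slackness gives $\tr[(C_{AA'}-Y^{\ast}\otimes\mathbb{I}_{A'})E^{\ast}]=0$; as both factors are positive semidefinite their product vanishes, and because $E^{\ast}$ is rank one this yields $(C_{AA'}-Y^{\ast}\otimes\mathbb{I}_{A'})\ket{\Omega}=0$, i.e.\ $C_{AA'}\ket{\Omega}=(Y^{\ast}\otimes\mathbb{I}_{A'})\ket{\Omega}$. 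Multiplying on the right by $\bra{\Omega}$ and taking $\tr_{A'}$, the right-hand side collapses to $Y^{\ast}$ because $\tr_{A'}\ket{\Omega}\bra{\Omega}=\mathbb{I}_A$, while the left-hand side equals $\tr_{A'}[C_{AA'}\ket{\Omega}\bra{\Omega}]=D_A^{\dagger}$, using that the partial trace commutes with the adjoint and that $C_{AA'}$ is Hermitian. Thus $Y^{\ast}=D_A^{\dagger}$, and Hermiticity of the dual variable forces $D_A^{\dagger}=D_A$ and $Y^{\ast}=D_A$; substituting back into dual feasibility gives precisely $C_{AA'}-D_A\otimes\mathbb{I}_{A'}\ge 0$.

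I expect the main obstacle to be this necessity direction, and specifically the identification $\tr_{A'}[C_{AA'}\ket{\Omega}\bra{\Omega}]=D_A^{\dagger}$ extracted from complementary slackness at the rank-one optimum $E^{\ast}$: this is the single step that simultaneously produces the Hermiticity condition on $D_A$ and the operator inequality. It is tempting to argue that passivity constrains only the Hermitian part of $M_{AA'}$, since $\tr[M_{AA'}E_{AA'}]$ is automatically real on feasible $E_{AA'}$, which would appear to leave the anti-Hermitian part of $D_A$ unconstrained; the nontrivial content is that optimality precisely at the rank-deficient point $\ket{\Omega}\bra{\Omega}$ rules this out and forces the anti-Hermitian part of $D_A$ to vanish. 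Care is also needed to confirm Slater's condition and dual attainment, so that a genuine Hermitian dual certificate $Y^{\ast}$ is available to feed into the argument.
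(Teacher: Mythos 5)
Your proof is correct and follows essentially the same route as the paper's: both cast the problem as an SDP over Choi operators, use strict primal feasibility (a multiple of $\mathbb{I}_{AA'}$, i.e.\ Slater's condition) to guarantee a dual optimal certificate, and extract both the Hermiticity condition and the operator inequality from complementary slackness at the rank-one Choi operator of the identity, identifying the dual variable with $\tr_{A'}[d_A\ket{\Phi}\bra{\Phi}C_{AA'}]$ via a partial trace. The only cosmetic difference is that you contract complementary slackness as $(C_{AA'}-Y^{\ast}\otimes\mathbb{I}_{A'})\ket{\Omega}=0$ and recover $D_A^{\dagger}$, whereas the paper takes the partial trace of $d_A\ket{\Phi}\bra{\Phi}C_{AA'}=d_A\ket{\Phi}\bra{\Phi}(Y'\otimes\mathbb{I}_{A'})$ to obtain $D_A$ directly; Hermiticity of the dual variable makes the two equivalent.
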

    Notice that Eq.~\eqref{eq:condition} only depends on $\rho_{AB}$ and $H_{AB}$ through the operator $C_{AA'}$. In fact, Eq.~\eqref{CJeqn} guarantees that this operator contains all the information about how much energy can be extracted through local operations. 
	Once it is constructed, the operator inequality can be easily checked to find whether the pair $\{\rho_{AB},H_{AB}\}$ is CP-local passive or not. If it is not, the semidefinite program 
	can be solved to find the amount of energy that can be extracted, \snote{as well as the minimizing CPTP map}. The proof can be found in the Supplemental Material \cite{supp}, together with details on semidefinite programming duality theory, which we use in a similar manner as in the proof of the Holevo-Yuen-Kennedy-Lax conditions for quantum state discrimination \cite{holevo1973statistical,holevo1973statistical2,yuen1970optimal,yuen1975optimum}.

	\nay{On top of this characterization, we show that the condition of Theorem \ref{th:main1} is robust to errors, by using a recent result concerning convex channel optimization problems \cite{coutts2018certifying}. }
	Roughly, if the operator on the LHS of Eq.~\eqref{eq:condition} has smallest eigenvalue $-\varepsilon \le 0$, then the amount of energy that can be extracted is bounded as $	\Delta E_{(A)B} \ge -\varepsilon \, d_A$. We give the precise statement and the proof in the Supplemental Material \cite{supp}. 
	
	\prlsection{Sufficient conditions} 
	The condition of Theorem \ref{th:main1}, even though it is simple to verify, makes no direct reference to physical properties of the pair $\{\rho_{AB},H_{AB}\}$. It is important, however, to find physically relevant situations in which CP-local passivity holds. To that end, we derive sufficient conditions for steady states $\rho_{AB} = \sum_{i = 0} ^ {d_A \times d_B -1} p_i \ket{E_i} \bra{E_i}$ of Hamiltonians $H_{AB} = \sum_{i=0}^{d_A \times d_B -1} E_i \ket{E_i}\bra{E_i}$ of full Schmidt rank with a non-degenerate ground state. Steady states are always trivially CP-local passive for $p_0 = 1$, and Frey et al. \cite{frey2014strong} found qualitative conditions under which there exists a threshold ground state population $p_*$ such that the pair $\{\rho_{AB},H_{AB}\}$ remains CP-local passive for all $p_0 \ge p_*$ . Here, we provide explicit upper bounds on $p_*$ in terms of ground state entanglement and the energy gap with the first excited state.
	
	\begin{restatable}[Threshold ground state population]{thm}{suff} \label{th:suff1}
		Let the ground state $\ket{E_0}$ of the Hamiltonian $H_{AB}$ be non-degenerate and with full Schmidt rank. All pairs $\{\rho_{AB},H_{AB}\}$ with $\rho_{AB} = \sum_{i} p_i \ket{E_i} \bra{E_i}$ and $p_0 \ge p_*$ are CP-local passive with respect to A, with the threshold ground state population bounded from above by 
		\begin{align}\label{eq:pbound} 
		p_*  \le  \left( 1 + \frac{E_1   (q^{AB}_{0,\min})^2}{\max_{i\ge 1} \left[ E_i  (q^{AB}_{i,\max})^2 \right]} \right)^{-1} . 
		\end{align}
		$\{  q^{AB}_{i,\alpha} \}_{\alpha=0}^{d_A-1} $ denotes the Schmidt coefficients of $\ket{E_i}$ and $q^{AB}_{i,\min} \equiv \min_{\alpha} \left[ q^{AB}_{i,\alpha} \right]$, $q^{AB}_{i,\max} \equiv \max_{\alpha} \left[ q^{AB}_{i,\alpha} \right]$.
	\end{restatable}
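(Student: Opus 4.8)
The plan is to verify the operator condition of Theorem~\ref{th:main1} for the specific steady state $\rho_{AB}=\sum_i p_i\ketbra{E_i}{E_i}$ and to read off the threshold from a quantitative positivity estimate. Since shifting $H_{AB}$ by a multiple of the identity changes neither $\Delta E_{(A)B}$ nor the notion of CP-local passivity, I would first set $E_0=0$ without loss of generality, so that non-degeneracy of the ground state means $E_i>0$ for all $i\ge1$. Writing $\ket{E_i}=\sum_{a,b}\psi^{(i)}_{ab}\ket{a}_A\ket{b}_B$ and carrying out the partial transpose and the trace over $B$ in $C_{AA'}=\tr_B[\rho_{AB}^{\Gamma_A}H_{A'B}]$, I expect $C_{AA'}$ to collapse into a sum of rank-one operators $C_{AA'}=\sum_i p_i\sum_{j\ge1}E_j\,\ketbra{R^{(ij)}}{R^{(ij)}}$ with $\ket{R^{(ij)}}=\sum_{x,y}\big(\sum_b\bar\psi^{(i)}_{xb}\psi^{(j)}_{yb}\big)\ket{x}_A\ket{y}_{A'}$; in particular $C_{AA'}\ge0$. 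A parallel calculation using $\langle E_i|E_j\rangle=\delta_{ij}$ should give $M_A:=\tr_{A'}[d_A\ketbra{\Phi}{\Phi}C_{AA'}]=\sum_{i\ge1}p_iE_i\,\overline{\rho^{(i)}_A}$ with $\rho^{(i)}_A=\tr_B\ketbra{E_i}{E_i}$, which is Hermitian, so the Hermiticity hypothesis of Theorem~\ref{th:main1} holds automatically.

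The decisive structural ingredient is the completeness relation $\sum_j\ketbra{E_j}{E_j}=\mathbb{I}_{AB}$, which I would use to evaluate the ground-state sum exactly: $\sum_j\ketbra{R^{(0j)}}{R^{(0j)}}=\overline{\rho^{(0)}_A}\otimes\mathbb{I}_{A'}$. Using $E_0=0$ and $E_j\ge E_1$ for $j\ge1$, the $i=0$ contribution to $C_{AA'}-M_A\otimes\mathbb{I}_{A'}$ is then bounded below by $p_0E_1\Delta^{(0)}$, where $\Delta^{(i)}:=\overline{\rho^{(i)}_A}\otimes\mathbb{I}_{A'}-\ketbra{R^{(ii)}}{R^{(ii)}}$ and $\ket{R^{(ii)}}$ is the vectorization of $\overline{\rho^{(i)}_A}$. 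For the $i\ge1$ block I would drop the manifestly positive $j\neq i$ terms, leaving the lower bound $-\sum_{i\ge1}p_iE_i\Delta^{(i)}$ of exactly the same form. The proof thus reduces to the two one-sided eigenvalue estimates $\Delta^{(0)}\ge(q^{AB}_{0,\min})^2\,P_\perp$ and $\Delta^{(i)}\le(q^{AB}_{i,\max})^2\,P_\perp$, with $P_\perp=\mathbb{I}-\ketbra{\Phi}{\Phi}$.

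The main obstacle, and the place where full Schmidt rank and non-degeneracy enter, is precisely these two bounds, which must be stated relative to $P_\perp$ rather than $\mathbb{I}$. The reason is that the maximally entangled vector $\ket{\Phi}$ is the zero mode associated with the identity channel, and no bound proportional to $\mathbb{I}$ can survive there; I would confirm this by checking $(\overline{\rho^{(i)}_A}\otimes\mathbb{I}_{A'})\ket{\Phi}=\ket{R^{(ii)}}\langle R^{(ii)}|\Phi\rangle$ directly, so that $\Delta^{(i)}\ket{\Phi}=0$. Diagonalizing $\overline{\rho^{(i)}_A}$ in its Schmidt basis, $\Delta^{(i)}$ acts as $\diag((q^{AB}_{i,\alpha})^2)$ on the sector $\{\ket{\alpha}\ket{\beta}:\alpha\neq\beta\}\subset\ket{\Phi}^\perp$, and as a rank-one perturbation $\Lambda-\ketbra{\lambda}{\lambda}$ on the diagonal sector. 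A Cauchy--Schwarz argument (equivalently, eigenvalue interlacing for a rank-one subtraction, together with $\sum_\alpha(q^{AB}_{i,\alpha})^2=1$) shows that this perturbation has its single zero eigenvalue along $\ket{\Phi}$ and all remaining eigenvalues confined between $(q^{AB}_{i,\min})^2$ and $(q^{AB}_{i,\max})^2$. Full Schmidt rank is what guarantees $(q^{AB}_{0,\min})^2>0$ and hence a strictly positive ground-state bound.

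Combining the two estimates yields
\begin{equation}
C_{AA'}-M_A\otimes\mathbb{I}_{A'}\ge\Big(p_0E_1(q^{AB}_{0,\min})^2-(1-p_0)\max_{i\ge1}\big[E_i(q^{AB}_{i,\max})^2\big]\Big)P_\perp ,
\end{equation}
where I have used $\sum_{i\ge1}p_i=1-p_0$ in bounding the excited block. Since $P_\perp\ge0$, the right-hand side is positive semidefinite exactly when the scalar prefactor is non-negative, and solving $p_0E_1(q^{AB}_{0,\min})^2\ge(1-p_0)\max_{i\ge1}[E_i(q^{AB}_{i,\max})^2]$ for $p_0$ reproduces the stated threshold $p_*$, so that $p_0\ge p_*$ implies CP-local passivity via Theorem~\ref{th:main1}. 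I expect the only genuinely delicate point to be keeping all bounds tethered to $P_\perp$; every cruder estimate proportional to $\mathbb{I}$ will fail along $\ket{\Phi}$ and destroy the threshold.
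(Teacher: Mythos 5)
Your proposal is correct, but it is not the paper's route: the paper never invokes Theorem \ref{th:main1} to prove this result. Its proof works directly from Definition \ref{def:SLP}, introducing the transition matrix $S_{ij}=\bra{E_i}\mathcal E_A\otimes\mathcal I_B\left(\ketbra{E_j}{E_j}\right)\ket{E_i}$, reducing passivity to the scalar inequality $E_1(1-S_{00})p_0\ge\sum_{i\ge1}E_i(1-S_{ii})p_i$, sandwiching $1-S_{ii}$ between $(q_{i,\min})^2R$ and $(q_{i,\max})^2R$ via positivity of the real parts $r_i(\alpha,\beta)$, and eliminating the channel dependence by noting that $R=\Tr(\mathcal I_A-\mathcal E_A)$ is a basis-independent superoperator trace, hence the same for every $i$. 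You instead verify the operator criterion of Theorem \ref{th:main1} directly (using only its easy sufficiency direction, so there is no circularity), through the rank-one decomposition of $C_{AA'}$, the completeness identity $\sum_j\ketbra{R^{(0j)}}{R^{(0j)}}=\overline{\rho^{(0)}_A}\otimes\mathbb{I}_{A'}$, and two-sided spectral estimates of $\Delta^{(i)}$ pinned to $P_\perp$; I checked the decisive Cauchy--Schwarz step (for $\sum_\alpha v_\alpha=0$ one indeed gets $\left|\sum_\alpha(\mu_\alpha-\mu_{\min})v_\alpha\right|^2\le(1-d_A\mu_{\min})\sum_\alpha(\mu_\alpha-\mu_{\min})|v_\alpha|^2$, giving $\Delta^{(0)}\ge\mu_{0,\min}P_\perp$) and the assembly into the stated threshold, and all steps hold. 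Notably, your convention $\sum_\alpha(q_{i,\alpha})^2=1$ (amplitudes) makes your final inequality reproduce the theorem's formula verbatim, whereas the paper's own proof writes $\ket{E_i}=\sum_\alpha\sqrt{q_{i,\alpha}}\ket{\alpha_{(i)}}\ket{\tilde\alpha_{(i)}}$ (its $q$'s are reduced-state eigenvalues) and thus actually establishes a bound involving fourth powers of amplitudes; since every reduced state satisfies $\mu_{i,\max}\ge1/d_A\ge\mu_{0,\min}$, your bound is at least as strong under either reading. Your method also produces an explicit certificate $C_{AA'}-M_A\otimes\mathbb{I}_{A'}\ge c\,P_\perp$ that feeds directly into the robust version of Theorem \ref{th:main1}, which the paper's channel-based argument does not provide; the paper's argument, in exchange, is more elementary and independent of the Choi formalism. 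One small imprecision: your claim that the nonzero eigenvalues of $\Lambda-\ketbra{\lambda}{\lambda}$ are bounded below by $(q_{i,\min})^2$ needs full Schmidt rank, which only $\ket{E_0}$ is assumed to have; this is harmless because for $i\ge1$ you use only the upper bound.
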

	See the Supplemental Material \cite{supp} for the proof, and an example illustrating the tightness of the bound. The idea behind it is that, if the ground state population is high enough, the energetic changes caused by any CPTP map will be dominated by the energy gained by exciting the ground state into higher energy levels, making the total change non-negative. 
	
For thermal states, this result implies that, if the ground state has full Schmidt rank, there exists a threshold temperature $T_*>0$ below which CP-local passivity holds (note that if $T=0$, CP-local passivity holds trivially). Moreover, this threshold temperature is such that 
	\begin{align} 
	\braket{H}_{\beta_*}\ge E_1 p_0 (q^{AB}_{0,\min})^2	\,\;,
	\label{eq:Tb}
	\end{align}
where $\braket{H}_{\beta_*}$ is the average energy in the thermal state of inverse temperature $\beta_*$.
 
We now describe when we expect this bound to be of importance. 
An entangled state of full Schmidt rank is typical in first-neighbor interactions where the local Hamiltonians do not commute with the interaction ones.
However, given that $q^{AB}_{i,\min} \le 1/d_A$, the bound weakens as the size of $A$ grows (and it trivializes once $d_A>d_B$). Also, a unique ground state and a finite energy gap is needed. On top of that, frustration is required, as we show in the following. Let us rewrite the Hamiltonian as $H_{AB}=H_A+H_B+V_{AB}$. The \emph{frustration energy} of $H_{AB}$ is defined as
	\begin{equation}
	E_f \equiv E_0^{H_{AB}}-E_0^{H_{A}+H_B}-E_0^{V_{AB}},
	\end{equation}
	where $E_0^{H}$ is the ground state energy of Hamiltonian $H$. This quantity measures the degree of frustration of $H_{AB}$ w.r.t. a particular decomposition into local and interaction terms of $H_{AB}$.
	The main result of \cite{dawson2004frustration} then states that
	\begin{equation}
	\frac{E_f}{\max_{i \in \{A,B\}} E_1^i} \ge 1- q^{AB}_{0,\max} \ge (d_A-1)q^{AB}_{0,\min},
	\end{equation}
	where $E_1^i$ is the gap of the local Hamiltonian $H_i$.
	This shows precisely that a certain level of frustration is necessary to have entanglement (in particular, with full Schmidt rank) in a unique ground state.
	
Note however, that while these conditions are sufficient, they are by no means necessary. In fact, we provide simple examples of pairs that are CP-local passive but in which $i)$ the ground state is not entangled, $ii)$ the ground state is degenerate and $iii)$ the state is not diagonal in the energy eigenbasis. These can be found in the Supplemental Material \cite{supp}.

	\prlsection{Thermodynamic limit} The bound in Eq.~\eqref{eq:pbound} trivializes when the system $B$ becomes very large, as the energy $E_i$ grows with it. However, we show that for thermal states with weak spatial correlations, one can increase the size of system $B$ indefinitely without breaking CP-local passivity. Hence, this phenomenon can hold even in the thermodynamic limit. First, we need the following definition. 
	
	%
	\begin{definition}[Clustering of correlations]
		A state $\rho$ on a finite square lattice $\mathbb Z^D$ has $\epsilon(l)$-clustering of correlations if
		\begin{equation}\nonumber
		\max_{M,N} \left| \tr[M\otimes N \rho]-\tr[M\rho]\tr[N\rho] \right| \le \vert\vert M \vert \vert \, \vert\vert N \vert\vert \, \epsilon(l),
		\end{equation}
		where the operator $M$ has support on region $A$ and $N$ on region $B$, and $l\le \text{dist} (A,B)$, with $\text{dist} (A,B)$ the Euclidean distance on the lattice. 
	\end{definition}

	For a state $\rho$ with $\epsilon(l)$-clustering of correlations, it is reasonable to expect that CP-local passivity is only determined by the vicinity of the region in which we act. We make this intuition precise in the following result. Let $H_{AB}$ be a Hamiltonian on regions $A,B$ in a $d$-dimensional finite square lattice.  Let $B_1, B_2$ be any splitting of $B$ (see Fig.~\ref{fig:regions}), with $l \equiv \text{dist} (A,B_2)$ the distance over which $B_1$ shields $A$ from $B_2$, with a boundary between $B_1,B_2$ of size $\vert \partial B_2 \vert$. More precisely, $H_{AB}$ takes the form
	\begin{equation}\label{eq:hamAB12}
	H_{AB}=H_A+V_{AB_1}+H_{B_1}+V_{B_1B_2}+H_{B_2}.
	\end{equation}
	We shall denote $H_{AB_1} \equiv H_A+V_{AB_1}+H_{B_1}$, and define $E_i^{AB_1},q_{i,\alpha}^{AB_1}$ as the eigenvalues and Schmidt coefficients of $H_{AB_1}$. Let region $S \subseteq A$ be such that no site in $S$ interacts with any site outside of the region $A$ under $H_{AB}$ (see Fig.~\ref{fig:regions}). The result is as follows:
	\begin{restatable}{thm}{largeth} \label{th:suffcorr}
		Consider a Hamiltonian $H_{AB}$ as in Eq.~\eqref{eq:hamAB12} and let $\tau_{AB}^{\beta}=e^{-\beta H_{AB}}/Z_{AB}$ be its thermal state with $\epsilon(l)$-clustering of correlations. There exists a finite temperature $\beta_*$ such that all pairs $\{\tau_{AB}^{\beta},H_{AB}\}$ with $\beta \ge \beta_*$ are CP-local passive with respect to local operations on $S$ if the regions $B_1,B_2$ can be chosen such that
		\begin{align}\label{eq:suffineq}
		E_1^{AB_1} \left( q_{0,\min} ^{AB_1}\right)^2 > \lambda(l) \,\;,
		\end{align}
		where 
		\begin{align} 
		\lambda(l) = K d_A^2 \,\vert\vert H_A \vert\vert \, |\partial B_2| \, (\epsilon(l/2)+c_1 e^{-c_2 l}) \,\;. \label{lambda_expr}
		\end{align}
		Moreover, $\beta_*$ is such that 
		\begin{align}\label{eq:boundbeta}
		\tr[e^{-\beta_* H_{AB_1}}]^{-1} &\le \left(1+\frac{\lambda(l)}{\max_{i\ge 1} \left[ E^{AB_1}_i (q^{AB_1}_{i,\max})^2\right ] } \right) \nonumber \\ & \times  \left( 1 + \frac{E^{AB_1}_1  (q^{AB_1}_{0,\min})^2} {\max_{i\ge 1} \left[ E^{AB_1}_i (q^{AB_1}_{i,\max})^2 \right] }\right)^{-1}.
		\end{align}

		where $K,c_1,c_2 > 0$ are constants.
	\end{restatable}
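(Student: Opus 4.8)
\textbf{Proof proposal for Theorem~\ref{th:suffcorr}.}

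The plan is to reduce the thermodynamic-limit statement to the finite-system bound of Theorem~\ref{th:suff1} by controlling the error introduced when one replaces the full Hamiltonian $H_{AB}$ with the truncated Hamiltonian $H_{AB_1}$. The guiding idea is that, because the state has $\epsilon(l)$-clustering of correlations and $B_1$ shields the active region from $B_2$, the energetics of a local map on $S$ should be essentially governed by the nearby region $AB_1$ alone. Concretely, I would work through the operator $C_{SS'}$ of Theorem~\ref{th:main1} (now with the active subsystem taken to be $S$ rather than all of $A$) and show that it is close, in operator norm, to the analogous operator $C^{AB_1}_{SS'}$ built from the restricted pair $\{\tau^{\beta}_{AB_1}, H_{AB_1}\}$. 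The discrepancy between these two operators is precisely what $\lambda(l)$ in Eq.~\eqref{lambda_expr} is designed to bound.

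First I would isolate the sources of error in passing from $H_{AB}$ to $H_{AB_1}$. The terms $V_{B_1B_2}$ and $H_{B_2}$ do not act on $S$ (since by hypothesis no site in $S$ interacts outside $A$), so when computing $C_{SS'}=\tr_{\overline{S}}[\rho^{\Gamma_S}H]$ their contribution factorizes into an expectation value times an operator supported away from $S$; the clustering hypothesis then bounds how far this factorizes from the product form, contributing the $\epsilon(l/2)$ piece. Second, I would replace the genuine marginal of the global thermal state $\tau^{\beta}_{AB}$ on the region $AB_1$ by the thermal state $\tau^{\beta}_{AB_1}$ of the truncated Hamiltonian; the exponential-decay term $c_1 e^{-c_2 l}$ encodes the standard fact that local marginals of Gibbs states are exponentially insensitive to distant modifications of the Hamiltonian (a locality-of-temperature / Araki-type estimate). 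The prefactor $K d_A^2\,\Vert H_A\Vert\,|\partial B_2|$ simply collects the dimensional factors from the partial transpose on $S$, the norm of the local energy scale, and the number of boundary terms that mediate the coupling across $\partial B_2$.

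Once the closeness $\Vert C_{SS'}-C^{AB_1}_{SS'}\Vert \le \lambda(l)/(\text{suitable factors})$ is established, I would invoke the robust version of Theorem~\ref{th:main1} mentioned in the excerpt (the $\Delta E_{(A)B}\ge -\varepsilon\, d_A$ stability bound). The restricted pair $\{\tau^{\beta}_{AB_1},H_{AB_1}\}$ is CP-local passive whenever $\beta$ is large enough by Theorem~\ref{th:suff1}, which supplies a positive margin in the operator inequality of size controlled by $E_1^{AB_1}(q^{AB_1}_{0,\min})^2$. The sufficient condition~\eqref{eq:suffineq} is exactly the requirement that this margin exceed the perturbation $\lambda(l)$, so that the operator inequality survives the replacement and the true pair $\{\tau^{\beta}_{AB},H_{AB}\}$ remains CP-local passive. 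Tracking the two competing quantities through the threshold-population formula of Theorem~\ref{th:suff1}, with $\lambda(l)$ added to the numerator as an error budget, yields the explicit bound on $\tr[e^{-\beta_* H_{AB_1}}]^{-1}$ in Eq.~\eqref{eq:boundbeta}.

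I expect the main obstacle to be the second error estimate: proving that the marginal of $\tau^{\beta}_{AB}$ on $AB_1$ is exponentially close to $\tau^{\beta}_{AB_1}$ uniformly in the size of $B_2$. This is where the clustering hypothesis must be combined with a genuine locality property of Gibbs states, and getting a bound whose constants $c_1,c_2$ are independent of $|B_2|$ (so that the thermodynamic limit $|B_2|\to\infty$ is controlled) is the delicate point; the factorization/clustering argument for the $V_{B_1B_2}$ and $H_{B_2}$ terms is comparatively routine once the marginal has been pinned down. A secondary subtlety is ensuring that the dimensional prefactors in $\lambda(l)$ stay finite as $B$ grows, which is why the bound is phrased through $d_A$, $\Vert H_A\Vert$ and the \emph{boundary} size $|\partial B_2|$ rather than through any extensive quantity of $B$.
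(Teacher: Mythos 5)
Your high-level plan---reduce to the finite region $AB_1$ via the clustering hypothesis plus a Brand\~ao--Kastoryano-type estimate on marginals, then feed $\lambda(l)$ into Theorem~\ref{th:suff1} as an error budget---is indeed the paper's strategy (and the uniformity in $|B_2|$ you flag as the main obstacle is handled there simply by citing Lemma~\ref{th:locglob}, i.e.\ Theorem~4 of the Brand\~ao--Kastoryano paper). However, the mechanism you propose for combining ``margin'' and ``perturbation'' has a genuine gap that would make the proof fail. The operator inequality of Theorem~\ref{th:main1} can \emph{never} hold with a uniform positive margin: writing $Y=\tr_{S'}[d_S\ket{\Phi}\bra{\Phi}C_{SS'}]$, one has identically
\begin{equation}
\bra{\Phi}\left(C_{SS'}-Y\otimes\mathbb{I}_{S'}\right)\ket{\Phi}=\frac{1}{d_S}\left(\tr[d_S\ket{\Phi}\bra{\Phi}C_{SS'}]-\tr[Y]\right)=0,
\end{equation}
reflecting the fact that the identity channel always yields zero energy change. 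Hence whenever $C_{SS'}-Y\otimes\mathbb{I}_{S'}\ge 0$ holds, the vector $\ket{\Phi}$ lies in its kernel and the smallest eigenvalue is exactly $0$. There is therefore no spectral gap for $\lambda(l)$ to be compared against, and an arbitrarily small generic perturbation of $C_{SS'}$ (any one whose induced perturbation $\Delta$ of $C_{SS'}-Y\otimes\mathbb{I}_{S'}$ has $\Delta\ket{\Phi}\neq 0$) destroys positive semidefiniteness no matter how small its operator norm is, since a positive semidefinite matrix with vanishing expectation on $\ket{\Phi}$ must annihilate $\ket{\Phi}$. Your fallback, the robust version of Theorem~\ref{th:main1}, cannot rescue this because it runs in the wrong direction: it only yields $\Delta E\ge -\varepsilon\, d_S$, i.e.\ \emph{approximate} passivity, whereas the theorem claims exact CP-local passivity.

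The paper's proof avoids this by never comparing operators uniformly; it compares two \emph{channel-dependent} quantities, both of which vanish at the identity channel. Choosing the ground-state population with slack $\lambda(l)$, i.e.\ $p_0$ equal to the right-hand side of Eq.~\eqref{eq:boundbeta}, and rerunning the argument of Theorem~\ref{th:suff1} yields a margin proportional to the superoperator trace, $\Delta E^{\mathcal{E}_A}_{(A)B_1}\ge\lambda(l)\Tr(\mathcal{I}_A-\mathcal{E}_A)$, while Lemma~\ref{le:energy_diff_prop} (which uses that the map acts on $S$, so only $H_A$ applied to the marginal on $A$ matters, together with Lemma~\ref{th:locglob}) bounds the truncation error by $\lambda(l)\, d_A^{-2}\,\Vert\mathcal{E}_A-\mathcal{I}_A\Vert_{1,1}$, which also vanishes at the identity. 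The step that closes the argument is the norm inequality of Lemma~\ref{le:norms}, $\Tr(\mathcal{I}-\mathcal{E})\ge d^{-2}\Vert\mathcal{I}-\mathcal{E}\Vert_{1,1}$, showing that the margin dominates the error \emph{for every channel simultaneously}, whence $\Delta E^{\mathcal{E}_A}_{(A)B}\ge 0$ exactly. This channel-by-channel comparison is the idea missing from your proposal. Note also that condition~\eqref{eq:suffineq} does not play the role of ``margin exceeds perturbation'' in any operator sense; it merely guarantees that the required $p_0$ is strictly below $1$, so that a finite threshold temperature $\beta_*$ satisfying Eq.~\eqref{eq:boundbeta} exists.
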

	\begin{figure}[t]
		\includegraphics[width=0.7\columnwidth]{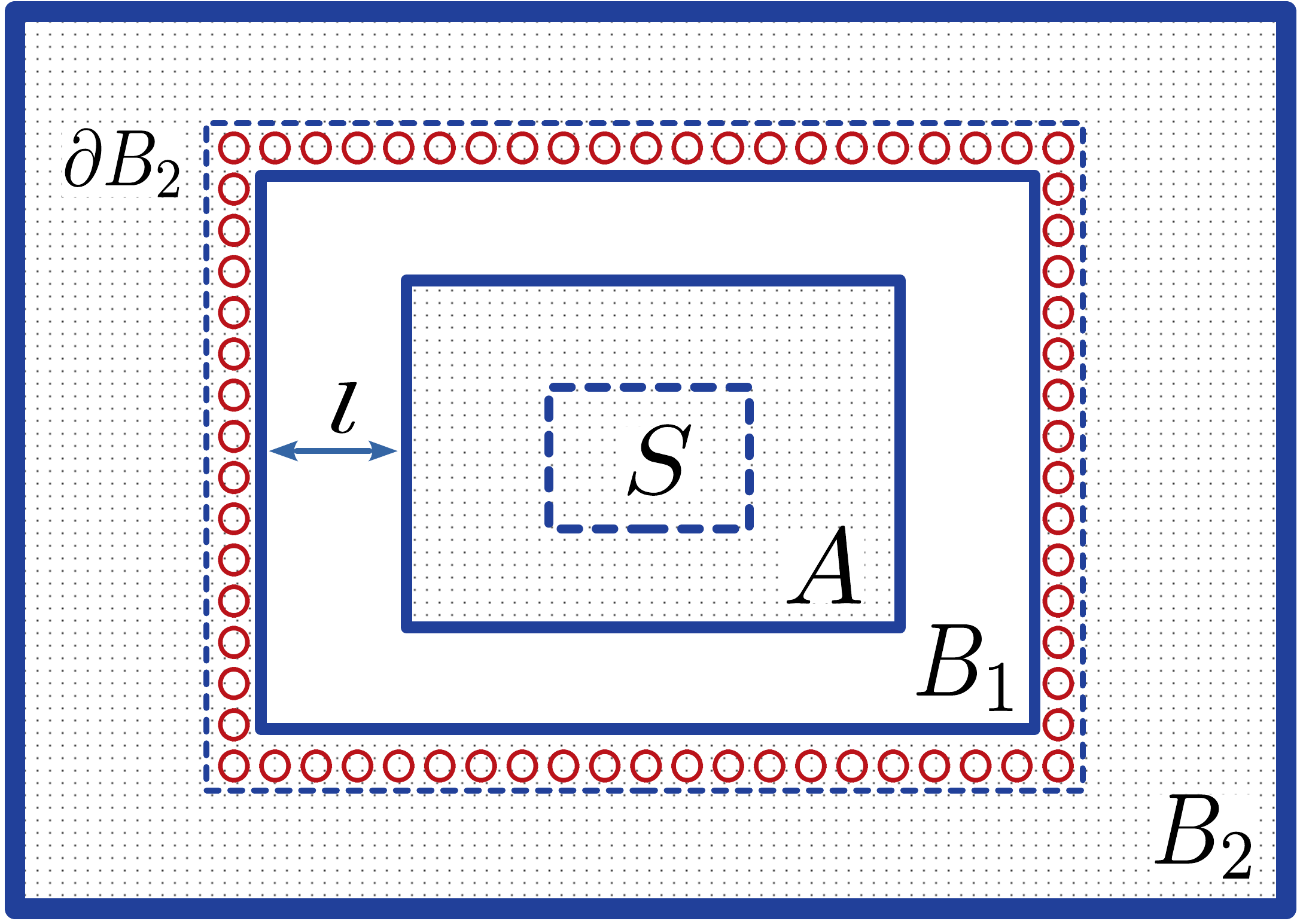}
		\caption{Regions on the lattice for Theorem \ref{th:suffcorr}. The map acts on a region $S \subset A$, which is shielded from the region $B_2$ by $B_1$, by a distance of $l$. The boundary of the lattice between $B_1$ and $B_2$ is defined as $\partial B_2$ and has a number of sites $\vert \partial B_2 \vert$.}
		\label{fig:regions}
	\end{figure}
		  
	The proof can be found in the Supplemental Material \cite{supp}. It relies on a result from \cite{brandao2016finite} (which builds on \cite{hastings2007quantum}),
	that shows how clustering of correlations implies that the marginals of many-body thermal states can be efficiently estimated by looking only at subregions of the lattice.
	%
	%
	Crucially, the bound on $\beta_*$ in Eq. \eqref{eq:boundbeta} only depends on parameters of the Hamiltonian $H_{AB_1}$ and on $\lambda(l)$, and is independent of $B_2$ (in particular, on its size) except for the boundary factor $|\partial B_2| \sim l^{D-1}$, with $D$ the dimension of the lattice. Hence, the best possible bound on $\beta_*$ for an arbitrary system size is achieved by choosing a partition $AB_1B_2$ such that the marginals on A of $\tau_{AB}$ and $\tau_{AB_1}$ are close enough, and the size of $AB_1$ is not too large to render the bound useless.


A choice of regions (or rather, the choice of $l$) giving a non-trivial bound is possible provided that the correlations of the thermal state decay fast enough. More concretely, as long as we can find an $l$ such that Eq. \eqref{eq:suffineq} holds, the upper bound on $p^*$ of Eq. \eqref{eq:boundbeta} is non-trivial. We expect this to be possible in a large class of models, as the gap rarely closes faster than polynomially with system size (if at all), and having an exponentially-decaying $\epsilon(l)$-clustering of correlations at finite temperature is a property of many lattice models \cite{araki1969gibbs,hastings2004decay,kliesch2014locality}.
%
%
In Fig.~\ref{fig:largeN}, we provide a numerical example of a model in which we calculate how the threshold temperature changes as we increase the system size. Note that the curves converge as $N$ becomes large, showing that larger system sizes do not affect the threshold temperature appreciably. 
	\begin{figure}[h]	
		\includegraphics[width=\columnwidth]{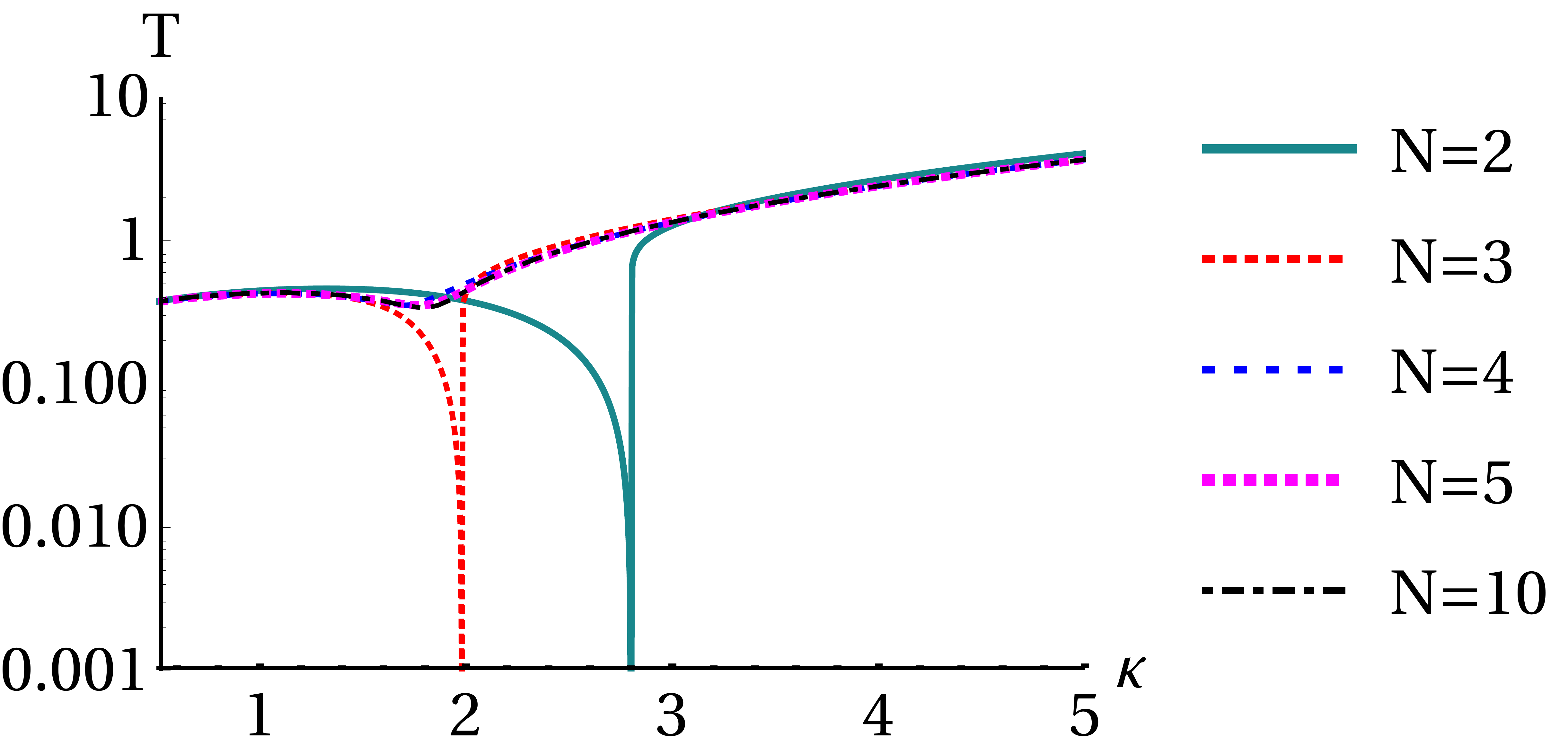}
		\caption{ Threshold temperature \nay{for} the 1D Hamiltonian $ H_{AB}=\sum^N_{l=1} \sigma^{(l)}_Z + \kappa(\frac{1+\gamma}{2}\sigma^{(l)}_X \sigma^{(l+1)}_X+\frac{1-\gamma}{2}\sigma^{(l)}_Y \sigma^{(l+1)}_Y)$ as a function of the coupling strength $\kappa$,  fixing the anisotropy parameter $\gamma=0.7$. The system $A$ on which the maps act is the leftmost qubit $l=1$. \nay{For $N>3$} the curves overlap, showing that increasing system $B$ beyond a certain size does not affect the threshold temperature appreciably. The threshold temperature was determined using the condition of Theorem \ref{th:main1}. }
		\label{fig:largeN}
	\end{figure}
	
		\prlsection{Classical CP-local passivity}
	This phenomenon can appear in certain classical situations (for instance, when the Hamiltonian is non-interacting and the initial state is $\rho_{AB}=\ket{0}\bra{0}\otimes \rho_B$), but we argue that either coherence or entanglement are necessary for it to be non-trivial. 
	\nay{We do this by showing that CP-local passivity, in a classical setting, only happens in very restricted situations.}
	Let us consider a fully classical model, with an incoherent state $\rho_{AB}$ and a Hamiltonian with product eigenstates, such that %
	\begin{align}
	H_{AB}=\sum_{i,j} E_{i,j} \ket{i}\bra{i} \otimes \ket{j} \bra{j},
	\\
	\rho_{AB}=\sum_{i,j} p_{i,j} \ket{i}\bra{i} \otimes \ket{j} \bra{j}.
	\end{align}
	Without loss of generality, we can order the energies such that $E_{i,j}\le E_{i+1,j}$ and $E_{i,j}\le E_{i,j+1}$.
	The optimal local cooling strategy is straightforward: map the initial eigenstates to the eigenstates of lower energy that can be accessed with local maps. Let us write
	\begin{align}
	\Delta E_{(A)B}^{\mathcal{E}_A} &= \tr[H_{AB}(\mathcal{E}_A\otimes \mathcal{I}_B)\rho_{AB}]-\tr[H_{AB}\rho_{AB}]\\ &=\sum_{i,k} \sum_j E_{ij} \sum_{l}p_{k l}\delta_{j,l} \bra{i}\mathcal{E}_A (\ket{k}\bra{k})\ket{i}-\delta_{i,k} \nonumber
	\\&\equiv \sum_{i,k} \tilde E_{i,k} \left( \bra{i}\mathcal{E}_A (\ket{k}\bra{k})\ket{i}-\delta_{i,k}\right),
	\end{align}
	where  $\tilde E_{i,k}=\sum_j E_{i,j}p_{k,j}$. 
	The optimal CPTP map is such that $\mathcal{E}^{\text{opt}}_A (\ket{k}\bra{k})=\ket{i_{k}^*}\bra{i_{k}^*}$  $\forall k$, 
	where $i^*_{k}=\text{argmin}_i \tilde E_{i,k}$, and thus
	\begin{align}\label{eq:classopt}
	\Delta E_{(A)B}= \Delta E_{(A)B}^{\mathcal{E}^{\text{opt}}_A} &= \sum_{k} \tilde E_{i^*_{k},\, k}-  \tilde E_{k, k},
	\end{align}
	which is non-negative if and only if $i^*_{k} = k \,\, \forall k$, in which case $\{\rho_{AB},H_{AB}\}$ is CP-local passive. This happens only if the matrix $\tilde E_{i,k}$ is such that the smallest number in each row (indexed by $k$) is in the diagonal.  This condition, however, can only be met by states with a particular support or by highly degenerate Hamiltonians. 
	To be more precise, let us look at the individual terms of Eq. \eqref{eq:classopt} for every $k>1$, 
	\begin{equation}\label{eq:iminusone}
	\tilde E_{k-1,k}-\tilde E_{k,k}=\sum_{j}(E_{k-1,j}-E_{k,j})p_{k,j}.
	\end{equation}
	Since $E_{k-1,j}-E_{k,j} \le 0$ by definition, the only way Eq. \eqref{eq:iminusone} can be non-negative is if either $p_{k,j}=0$ or $E_{k-1,j}=E_{k,j}$ $\forall j$, which constitutes a strong restriction on the support of the initial state and $H_{AB}$. For instance, no thermal state (with full support) of a Hamiltonian with any non-degeneracy on index $k$ will obey this condition.

	\prlsection{Discussion}
	We have found necessary and sufficient conditions for CP-local passivity, which take the form of a simple inequality of operators of size $d_A \times d_A$. We also derived simpler sufficient conditions that show definite physical situations in which this phenomenon appears, and we provide numerical examples illustrating the general picture. 
	
	Our proof of the necessary and sufficient conditions, of Theorem \ref{th:main1}, uses tools from the theory of convex optimization, widely used in quantum information, but which, apart from a few exceptions \cite{faist2015minimal,faist2018fundamental}, have not yet been exploited in quantum-thermodynamic contexts. In fact this is, to our knowledge, the first time that the theory of semi-definite-programming has been used in the context of energy extraction and passivity. We expect these tools to be of further use in similar situations in which the actions allowed on the state are limited in different physically motivated ways. The fact that we optimize over a linear function of the channels (the energy of the output) made the derivations particularly simple, but in fact recent results \cite{coutts2018certifying} easily allow for extensions to arbitrary non-linear functions.

A further set of previous results (e.g. \cite{oppenheim2002thermodynamical,jennings2010entanglement,perarnau2015extractable}) identify entanglement in the initial state as a useful resource in  energy extraction when one has access to global operations and the Hamiltonians are non-interacting. Here we explore a different side of the general picture, by showing that entanglement in the eigenstates can forbid the possibility of energy extraction via local operations when the interactions are strong.

	The underlying principle here is that entanglement in the low-energy eigenstates causes a fundamental lack of local control in systems at low  temperature, provided that the CPTP maps are fast compared to the dynamics of the system. 
	This effect could potentially also include
	 quenches and/or pulses that are commonly taken as the steps of quantum thermal cycles in which ``work" is exchanged\cite{gallego2014thermal,gelbwaser2015strongly,newman2017performance,perarnau2018strong}, in which case our results should put constraints on their regime in which those machines can perform. 
	%
	
	
A further study on CP-local passivity could be the characterization of scenarios in which this passivity can be circumvented by allowing classical communication.
This type of setting goes under the name of quantum energy teleportation (QET) \cite{hotta2008quantum,frey2013quantum,hotta2010energy}. 
Our necessary and sufficient conditions could help design better QET-based protocols, which have been applied both in quantum field theory \cite{hotta2010controlled} and algorithmic cooling in quantum information processing \cite{rodriguez2017correlation}.

	\quad \\
	
	\begin{acknowledgments} 
	\quad \\  
	\textit{Acknowledgments.}  
	The authors acknowledge useful discussions with Masahiro Hotta, Philippe Faist, Raam Uzdin, Marti Perarnau-Llobet and Mark Girard. This research was supported in part by Perimeter Institute for Theoretical Physics. Research  at  Perimeter  Institute  is  supported  by  the
	Government of Canada through the Department of Innovation, Science and Economic Development and by the Province of Ontario through the Ministry of Research, Innovation and Science. 
	N.R-B acknowledges support of CONACYT and Mike and Ophelia Lazaridis Scholarship. E. M-M. acknowledges support of the NSERC Discovery program as well as his Ontario Early Researcher Award.
	\end{acknowledgments}

\widetext	
		\appendix
	
	\section{Semidefinite Programming and a Proof of Theorem 1}\label{app:proof1}
	
	
	We start by giving a brief introduction to semidefinite programming and its duality theory. 
	A semidefinite program (SDP) is an optimization problem of the form 
	\begin{equation} \label{primal}
	\alpha = \inf \, \{ \tr(CX) : \Phi(X) = B, X \geq 0 \} 
	\end{equation} 
	where $X$ is the variable, $C$ and $B$ are Hermitian matrices, and $\Phi$ is a linear, Hermiticity-preserving superoperator. 
	
	Associated with every SDP is its dual, which is also an SDP, defined as 
	\begin{equation} \label{dual}
	\beta = \sup \, \{ \tr(BY) : \Phi^*(Y) \leq C \} 
	\end{equation} 
	where $Y$ is the dual variable and $\Phi^*$ is the adjoint of $\Phi$. 
	
	Note that if $X$ is a feasible solution (satisfies $\Phi(X) = B$, $X \geq 0$) and $Y$ is a dual feasible solution (satisfies $\Phi^*(Y) \leq C$) then we have 
	\begin{equation} \label{WD}
	\tr(CX) - \tr(BY) 
	=     
	\tr(CX) - \tr(\Phi(X)Y) 
	= 
	\tr(X(C - \Phi^*(Y))) 
	\geq 0
	\end{equation}
	since both $X$ and $C - \Phi^*(Y)$ are positive semidefinite. 
	This is known as Weak Duality, which states that $\alpha \geq \beta$. 
	Suppose we have a fixed feasible solution $X'$. Then if there exists a dual feasible solution $Y'$ satisfying ${\tr(X'(C - \Phi^*(Y'))) = 0}$, or equivalently 
	\begin{equation} \label{CS}
	X'C =  X \Phi^*(Y'),  
	\end{equation} 
	then this would certify that $X'$ is an optimal solution, via \eqref{WD}. 
	The condition \eqref{CS} is called \emph{complementary slackness}.  
	
	Under mild conditions, if $X'$ is an optimal solution to \eqref{primal}, then one can guarantee the existence of such a $Y'$ in the discussion above. 
	
	\begin{lemma} \label{SDPlemma}
		Suppose there exists $X > 0$ satisfying $\Phi(X) = B$. Then a feasible solution $X'$ is an optimal solution to $\eqref{primal}$ if and only if there is a dual feasible $Y'$ satisfying 
		\eqref{CS}. 
	\end{lemma}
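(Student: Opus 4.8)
The plan is to prove the two implications separately, with the forward (``only if'') direction carrying essentially all the content. Throughout I use weak duality \eqref{WD} as a black box, since it is already established above.

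For the ``if'' direction I would argue that the existence of a dual feasible $Y'$ obeying \eqref{CS}, equivalently $\tr(X'(C-\Phi^*(Y')))=0$, makes the inequality chain \eqref{WD} collapse to $\tr(CX')=\tr(BY')$. Combined with weak duality $\alpha\ge\beta$ and the feasibility bounds $\tr(CX')\ge\alpha$ and $\beta\ge\tr(BY')$, every inequality is forced to be an equality, so $X'$ attains $\alpha$ and is optimal (and $Y'$ is dual optimal). Note this direction needs neither the Slater hypothesis nor any deep theorem; it is just a repackaging of \eqref{WD}.

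For the ``only if'' direction the key input is strong duality. The hypothesis that there exists a strictly feasible $X>0$ with $\Phi(X)=B$ is precisely Slater's condition for the primal \eqref{primal}. Invoking the SDP strong duality theorem, this guarantees both $\alpha=\beta$ and attainment of the dual supremum by some feasible $Y'$ with $\tr(BY')=\beta=\alpha$. Given an optimal $X'$ with $\tr(CX')=\alpha$, the duality gap then vanishes: $\tr(X'(C-\Phi^*(Y')))=\tr(CX')-\tr(BY')=0$. To pass from the vanishing gap to the operator identity \eqref{CS}, I would use the standard fact that for positive semidefinite $P,Q$ one has $\tr(PQ)=0\Rightarrow PQ=0$: writing $\tr(PQ)=\tr(P^{1/2}QP^{1/2})$ as the trace of a positive semidefinite operator forces $P^{1/2}QP^{1/2}=0$, hence $P^{1/2}Q^{1/2}=0$ and therefore $PQ=0$. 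Applying this with $P=X'\ge 0$ and $Q=C-\Phi^*(Y')\ge 0$ (the latter by dual feasibility) yields $X'(C-\Phi^*(Y'))=0$, i.e.\ $X'C=X'\Phi^*(Y')$, which is exactly \eqref{CS}.

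The main obstacle is the strong duality theorem itself, namely the upgrade from weak duality to equality \emph{with} dual attainment under strict primal feasibility. If one prefers not to simply cite it, the route is a separating-hyperplane argument: consider the convex set of pairs $(\Phi(X)-B,\tr(CX))$ with $X\ge 0$, observe that strict feasibility places the optimal value in the relative interior of the feasible directions so that no supporting hyperplane can be ``vertical,'' and then read off the dual optimizer $Y'$ as the normal vector of the separating hyperplane. Everything else, both the easy direction and the passage from zero gap to \eqref{CS}, is routine linear algebra.
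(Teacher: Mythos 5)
Your proof is correct, but there is nothing in the paper to compare it against: the paper explicitly declines to prove Lemma~\ref{SDPlemma}, stating that the proof ``is beyond the scope of this discussion'' and referring the reader to \cite{boyd2004convex}. What you have written is essentially the standard textbook argument that this citation points to, and it is sound. Your ``if'' direction is a correct repackaging of weak duality \eqref{WD}: complementary slackness gives $\tr(CX')=\tr(BY')$, and then $\tr(CX') = \tr(BY') \le \beta \le \alpha \le \tr(CX')$ forces $X'$ to attain $\alpha$; as you note, this needs no Slater hypothesis. Your ``only if'' direction correctly isolates the two nontrivial ingredients: (i) strong duality \emph{with dual attainment} under the primal Slater condition (strict feasibility $X>0$, $\Phi(X)=B$, plus finiteness of $\alpha$, which holds here because the optimal $X'$ is assumed to exist), and (ii) the fact that $\tr(PQ)=0$ with $P\ge 0$, $Q\ge 0$ forces $PQ=0$, which you prove correctly via $\tr(PQ)=\tr(P^{1/2}QP^{1/2})$ and which is exactly what upgrades the scalar condition $\tr(X'(C-\Phi^*(Y')))=0$ to the operator identity \eqref{CS} (you also implicitly fix the paper's typo in \eqref{CS}, where $X\Phi^*(Y')$ should read $X'\Phi^*(Y')$). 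The one soft spot is your closing sketch of a separating-hyperplane proof of strong duality, which is too schematic to stand as a proof on its own; but since you offer it only as an alternative to citing the strong duality theorem --- the very theorem the paper leans on by deferring to \cite{boyd2004convex} --- this does not affect the validity of your argument. In short, your proposal supplies a complete proof where the paper supplies none.
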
 
	The proof of this is beyond the scope of this discussion, and we refer the interested reader to the book \cite{boyd2004convex}.  
	Recall the SDP which solves for the optimal local channel in our problem, reproduced below 
	\begin{equation}  \label{eq26}
	\inf \, \{ \tr[C_{AA'} E_{A A^{'}}] : \tr_{A'}[E_{AA'}]=\mathbb{I}_{A}, \, E_{AA'} \geq 0 \}. 
	\end{equation} 
	Note that there exists $E_{AA'} > 0$ satisfying $\tr_{A'}[E_{AA'}]=\mathbb{I}_{A}$ (take a scalar multiple of $\mathbb{I}_{AA'}$ for example). 
	Thus, the conditions for Lemma~\ref{SDPlemma} are satisfied. 
	
	Using the fact that $\tr(\tr_{A'}[E_{AA'}] Y) = \tr(E_{AA'} (Y \otimes \mathbb{I}_{A'}))$, we can use the formula \eqref{dual} to write the dual of \eqref{eq26} as 
	\begin{equation} \label{eq:dual2}
	\sup \, \{ \tr(Y) : Y \otimes \mathbb{I}_{A'} \leq C_{AA'} \}. 
	\end{equation} 
	Using Lemma~\ref{SDPlemma}, we have that $E_{AA'} = d_A \ket{\Phi}\bra{\Phi}$ (i.e. the identity channel) is an optimal channel if and only if there exists a dual feasible $Y'$ satisfying \eqref{CS}, which in this case can be written as 
	\begin{equation} 
	d_A \ket{\Phi}\bra{\Phi} C_{AA'} = d_A \ket{\Phi}\bra{\Phi} (Y' \otimes \mathbb{I}_{A'}). 
	\end{equation} 
	By taking the partial trace of both sides, we have that 
	\begin{equation} 
	\tr_{A'}[d_A \ket{\Phi}\bra{\Phi} C_{AA'}] = \tr_{A'}[d_A \ket{\Phi}\bra{\Phi} (Y' \otimes \mathbb{I}_{A'})] = 
	Y'   
	\end{equation} 
	noting again that $\tr_{A'}[d_A \ket{\Phi}\bra{\Phi}] = \mathbb{I}_{A}$. 
	Since this $Y'$ is dual feasible, we know from Eq.~\eqref{eq:dual2} that 
	\begin{equation} \label{necsuffcond}
	\tr_{A'}[d_A \ket{\Phi}\bra{\Phi} C_{AA'}] \otimes \mathbb{I}_{A'} \leq C_{AA'}, 
	\end{equation} 
	and $\tr_{A'}[d_A \ket{\Phi}\bra{\Phi} C_{AA'}]$ is Hermitian. 
	Thus, if the identity channel is optimal, i.e., no energy can be extracted, then Eq.~\eqref{necsuffcond} holds and $\tr_{A'}[d_A \ket{\Phi}\bra{\Phi} C_{AA'}]$ is Hermitian.
	
	Conversely, recall from Eq.~(3) in the manuscript that 
	\begin{equation} 
	\tr[H_{AB}(\mathcal{E}_A\otimes \mathcal{I}_B)\rho_{AB}] = \tr[C_{AA'} E_{A A^{'}}]
	\end{equation} 
	where $E_{A A^{'}}$ is the Choi-Jamio\l kowski operator of the CPTP map $\mathcal{E}_A$. 
	Thus, if Eq.~\eqref{necsuffcond} is true, and $\tr_{A'}[d_A \ket{\Phi}\bra{\Phi} C_{AA'}]$ is Hermitian, then 
	\begin{equation} 
	\tr[H_{AB}(\mathcal{E}_A\otimes \mathcal{I}_B)\rho_{AB}] = \tr[C_{AA'} E_{A A^{'}}]
	\geq \tr[[\tr_{A'}[d_A \ket{\Phi}\bra{\Phi} C_{AA'}] \otimes \mathbb{I}_{A'}] E_{A A^{'}}]. 
	\end{equation} 
	Eq.~(5) in the manuscript shows that 
	\begin{equation} 
	\tr[[\tr_{A'}[d_A \ket{\Phi}\bra{\Phi} C_{AA'}] \otimes \mathbb{I}_{A'}] E_{A A^{'}}] = \tr[\rho_{AB} H_{AB}]. 
	\end{equation} 
	Thus, the action of any local CPTP map $\mathcal{E}_A$  will not decrease the energy, as desired.

	This proof is a reformulation of that of the necessary and sufficient conditions for the problem of quantum state discrimination due to Holevo-Yuen-Kennedy-Lax ~\cite{holevo1973statistical,holevo1973statistical2,yuen1970optimal,yuen1975optimum}. 
	This problem is beyond the scope of this work, but it can be cast as an optimization over quantum channels as in Eq.~\eqref{eq26} for a different $C_{AA'}$ matrix. See Ref.~\cite{coutts2018certifying} for more details and for generalizations of this proof.

	\subsection{Robust version of Theorem 1} \label{app:epsilon}
	
	We also discuss here the possibility of the identity channel being \emph{almost optimal}. 
	By applying a result in \cite{coutts2018certifying} to our problem, we have that 
	\begin{equation} 
	\tr(d_A\ket{\Phi}\bra{\Phi} C_{AA'}) - \min_{E_{AA'}} \tr(E_{AA'} C_{AA'}) 
	\leq \varepsilon \, d_A 
	\end{equation} 
	where 
	\begin{equation} \label{eps_defn}
	\varepsilon = \left| \lambda_{\min}
	\left( 
	C_{AA'} - \frac{1}{2} \left(
	\tr_{A'}[d_A\ket{\Phi}\bra{\Phi} C_{AA'}] \otimes \mathbb{I}_{A'} 
	+ h.c. \right) 
	\right) \right|
	\end{equation} where we see that $\varepsilon$ is in fact a measure of how far away the matrix is from being Hermitian and positive.
	
	The proof of this is rather simple in this case. 
	Let $\gamma = \tr(d_A\ket{\Phi}\bra{\Phi} C_{AA'}) \in \mathbb{R}$ for convenience.   
	Define 
	\begin{equation} 
	Y = \frac{1}{2} \left( \tr_{A'}[d_A\ket{\Phi}\bra{\Phi} C_{AA'}] + h.c. \right) - \varepsilon \, \mathbb{I}_{A}. 
	\end{equation}   
	Then we have 
	\begin{equation} 
	C_{AA'} - Y \otimes \mathbb{I}_{A'} 
	\geq 0 
	\end{equation} 
	using Eq.~\eqref{eps_defn}. 
	Therefore, $Y$ is dual feasible and has value 
	$\tr(Y) = \gamma - \varepsilon \, d_A$. 
	By weak duality, we have that 
	\begin{equation} 
	\min_{E_{AA'}} \tr(E_{AA'} C_{AA'}) \geq \tr(Y) = \gamma - \varepsilon \, d_A. 
	\end{equation} 
	Rearranging the above inequality gives us the result.

	\section{Sufficient physical conditions are not necessary (numerical examples)}\label{app:num}
	
	Here we show that
	the
	sufficient physical conditions for CP-local passivity, presented first in \cite{frey2014strong} and strengthened in the present work, are by no means necessary. 
	More concretely, we find situations in which either an entangled or non-degenerate ground state are not present. Furthermore, we relax the assumption of $\left[\rho_{AB},H_{AB}\right]=0$, finding that this is not necessary for CP-local passivity.
	
	\subsection{ CP-local passivity without entangled ground state}
	
	The system consists of a pair of qubits A and B, with Hamiltonian
	\begin{equation}
	H_{AB}= \frac{1}{2}\omega\sigma_z^A+\frac{1}{2}\omega\sigma_z^B+\frac{\kappa}{2}(\sigma_x^A\sigma_x^B+\sigma_y^A\sigma_y^B), 
	\label{eq:hamab}
	\end{equation}
	where $\kappa>0$ is the coupling strength. 
	When fixing $\omega=-2$, 
	the eigenstates of the system are given by
	\begin{equation}
	\displaystyle
	\{\ket{00},\quad\ket{11},\quad\frac{1}{\sqrt{2}}\left(\ket{10}-\ket{01}\right),\quad\frac{1}{\sqrt{2}}\left(\ket{01}+\ket{10}\right)\}    
	\end{equation}
	with corresponding eigenenergies  $\{-2,2,-\kappa,\kappa\}$, respectively. Note that for $\kappa<2$ the ground state is non-degenerate but separable $\ket{E_0}=\ket{00}$; and for $\kappa=2$, the ground state is degenerate.
	
	We find the threshold temperature in the region of ground state degeneracy,
	by using the necessary and sufficient conditions presented in our theorem 1. In Fig.\ref{fig:non-entangled-ground-state}, we show the results for $T^*$ as a function of the strength coupling, where we find that the system can be CP-local passive even without an entangled ground state.
	

	\begin{figure}[h]
		\includegraphics[width=0.6\textwidth]{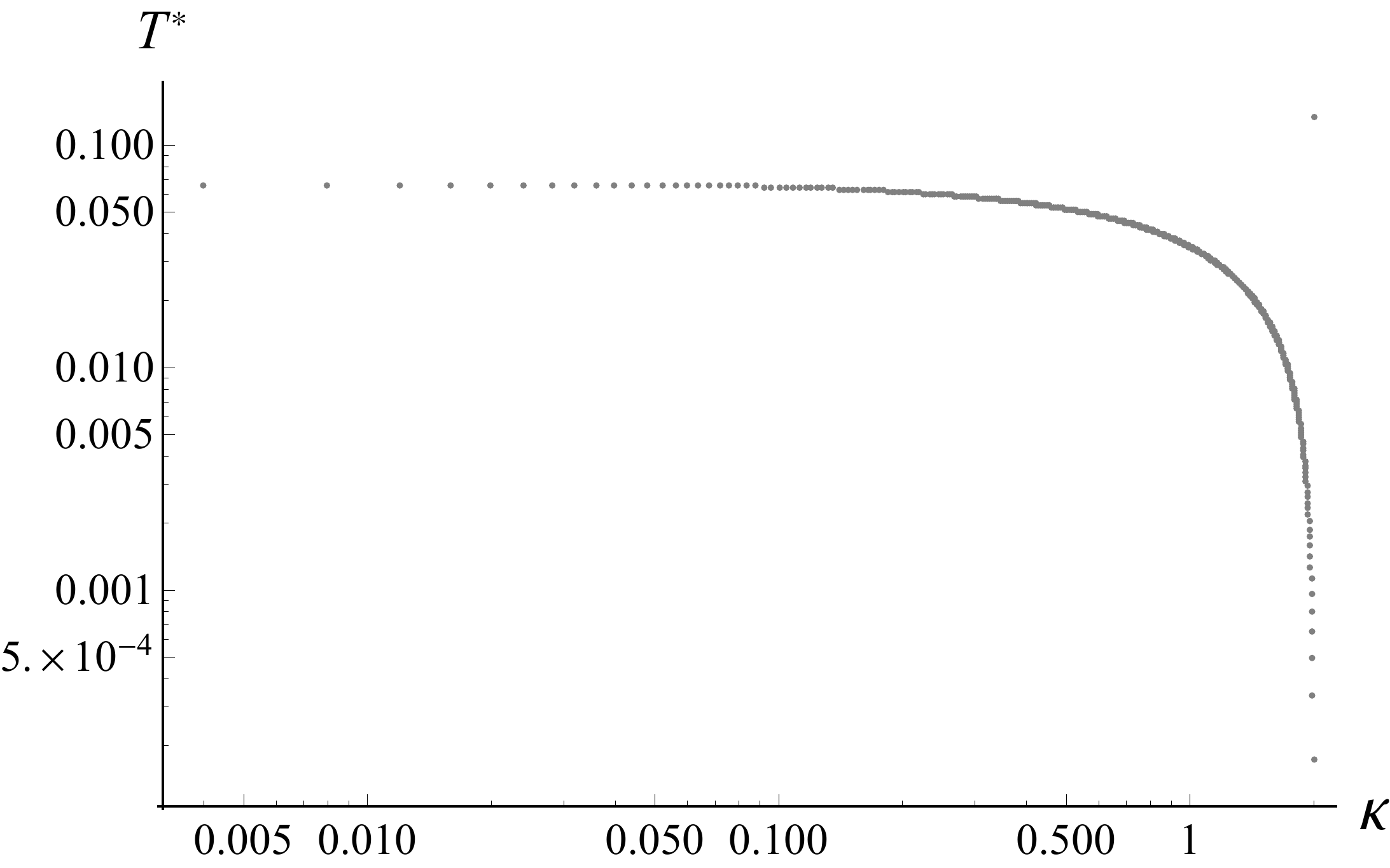}
		\caption{\textbf{CP-local passivity without entangled ground state: } Here we show the threshold temperature $T^*$ for a pair of qubits with Hamiltonian $H_{AB}= \frac{1}{2}\omega\sigma_z^A+\frac{1}{2}\omega\sigma_z^B+\frac{\kappa}{2}(\sigma_x^A\sigma_x^B+\sigma_y^A\sigma_y^B)$ as a function of the coupling strength $\kappa$, with fix $\omega=-2$. Even though the ground state is separable in this region of strength coupling  ($\ket{E_0}=\ket{00}$ for $-2<\kappa<2$), it is still possible to obtain CP-local passivity.  The dip for $k\to2$, approached from the left, occurs as the ground state gets close to be degenerate.}
		\label{fig:non-entangled-ground-state}
	\end{figure}

	\subsection{ CP-local passivity with degenerate ground state}
	
	Consider a pair of qubits A and B, with Hamiltonian $H_{AB}=\kappa\sigma_x^A\sigma_x^B$, where $\kappa>0$. For this case, the eigenstates of the system are given by
	\begin{equation}
	\displaystyle
	\{\frac{1}{\sqrt{2}}\left(-\ket{00}+\ket{11}\right), \quad \frac{1}{\sqrt{2}}\left(\ket{10}-\ket{01}\right), \quad \frac{1}{\sqrt{2}}\left(\ket{00}+\ket{11}\right),\quad \frac{1}{\sqrt{2}}\left(\ket{01}+\ket{10}\right)\}    
	\end{equation}
	with corresponding eigenenergies  $\{-\kappa,-\kappa,\kappa,\kappa\}$, respectively, having the ground state degenerated.
	
	This pair in a thermal state at temperature T will be CP-local passive $\forall T$. This can be verified numerically using the necessary and sufficient conditions given in our Theorem 1.

	\subsection{ CP-local passivity with coherence in the eigenbasis}

	The main results of previous work on CP-local passivity \cite{frey2014strong} are restricted to states in the form of statistical mixtures of the eigenstates (eigenmixtures). This assumption was motivated from the distinctive role of this type of states in global passivity (a finite state is global passive if and only if it is an eigenmixture $\rho=\sum_k p_k \ket{E_k}\bra{E_k}$ with $p_k\geq p_{k'}$ for $E_k\leq E_{k'}$, in \cite{lenard1978thermodynamical}).
	
	However, here we remove that restriction and show that it is not a necessary condition for CP-local passivity. As a simple example, consider a bipartite system with the Hamiltonian of Eq. (\ref{eq:hamab}), with $\omega=2$ and $\kappa=10$, i.e.
	\begin{equation}
	H_{AB}=\sigma_z^A+\sigma_z^B+\kappa(\sigma_x^A\sigma_x^B+\sigma_y^A\sigma_y^B)/2, 
	\end{equation}
	and the system in a state with coherence in the eigenbasis, for instance
	\begin{equation}
	\rho_{AB}=U\frac{e^{-\beta H_{AB}}}{Z}U^\dagger, 
	\quad
	{\rm with}
	\quad
	U={\rm exp}[i \sigma_x^A\sigma_x^B]
	\end{equation}
	where $Z=\Tr{\left(e^{-\beta H_{AB}}\right)}$. Even though $[\rho_{AB},H_{AB}]\neq 0$, this system is in a CP-local passive state for all $\beta^{-1}<4.95$, this was verified numerically using the result of our Theorem 1.
	
	\section{Tightness of bounds on $p^*$ and $T^*$ (numerical examples)}\label{app:num2}
	
	Here, we show an example that illustrates how far our sufficient bound on $T^*$ from Theorem \ref{th:suff1} may be from the actual threshold temperature $T^*$; and to explore how low our bound $p_b$ for the critical population $p^*$ can be. Let us consider two qubits with Hamiltonian
	\begin{equation}
	H=\sigma_z^A+\sigma_z^B+\kappa\left(\frac{1+\gamma}{2}\sigma_x^A\sigma_x^B+\frac{1-\gamma}{2}\sigma_y^A\sigma_y^B\right).
	\end{equation}
	In particular, let us consider the case of a small coupling anisotropy $\gamma=0.0001$.
	We found numerically the threshold temperature $T^*$ by using the necessary and sufficient conditions presented in our theorem 1, for different values of strength coupling $\kappa$ (see Fig. \ref{fig:bounfT}, in gray). On the other hand, from our inequality Eq. (14) in the main text, obtained from physical characteristics of the system, we can get a lower bound $T_b$ for the critical temperature, see Fig. \ref{fig:bounfT}, which we have verified numerically is in agreement with the results of $T^*$.
	
	\begin{figure}[h]
		\includegraphics[width=0.52\textwidth]{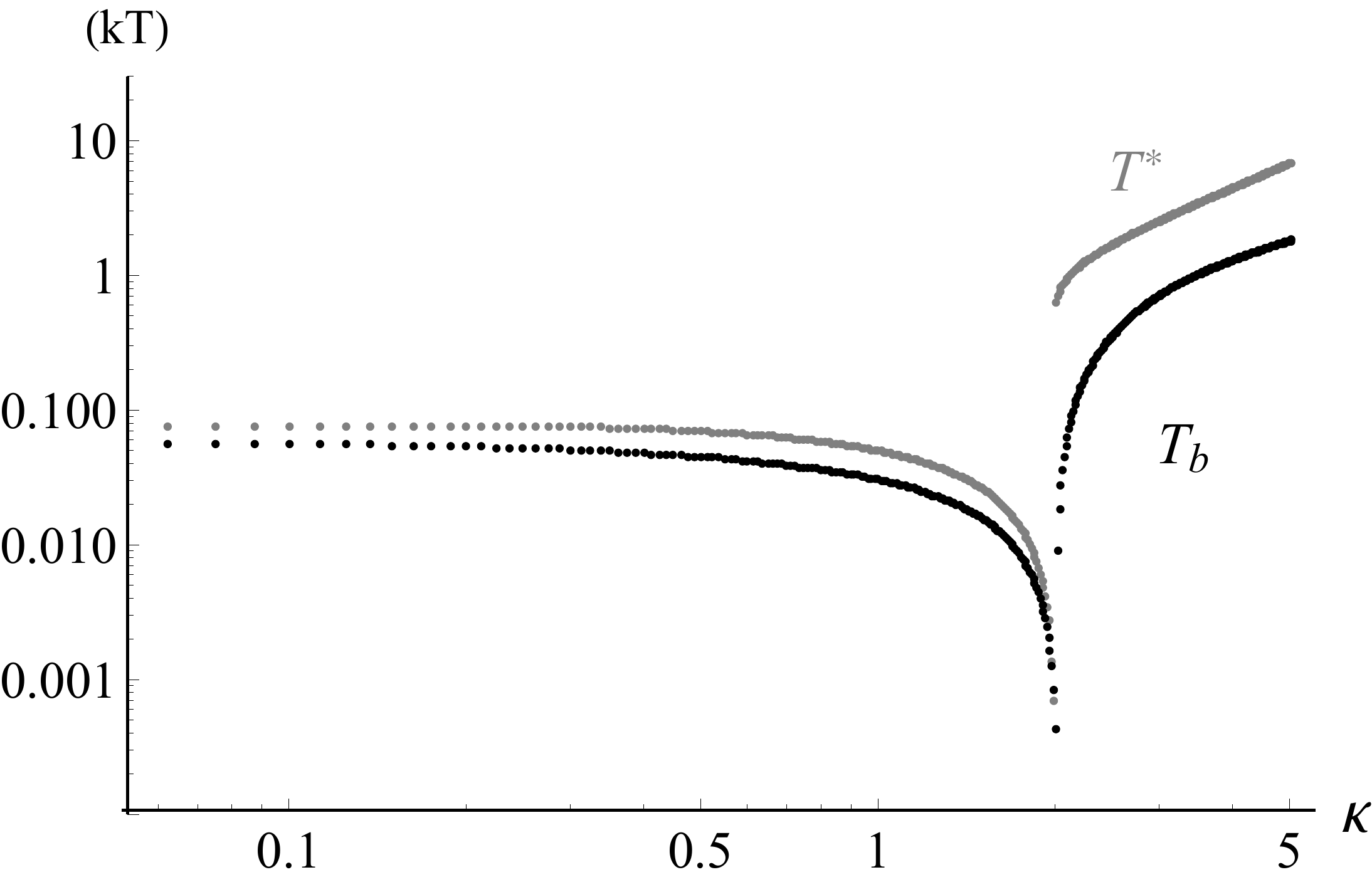}
		\caption{Threshold temperature $T^*$ (in gray), and bound $T_b$ (in black), as a function of the strength coupling $\kappa$, for a pair of qubits with Hamiltonian $H=\sigma_z^A+\sigma_z^B+\kappa\left(\frac{1+\gamma}{2}\sigma_x^A\sigma_x^B+\frac{1-\gamma}{2}\sigma_y^A\sigma_y^B\right)$, in the case of an small anisotropy ($\gamma=0.0001$).}
		\label{fig:bounfT}
	\end{figure}
	


	\section{Proof of Theorem 2   }\label{app:suff}
	
	\begin{customthm}{2}[Threshold ground state population] \label{th:suff1}
		Let the ground state $\ket{E_0}$ of the Hamiltonian $H_{AB}$ be non-degenerate and with full Schmidt rank. All pairs $\{\rho_{AB},H_{AB}\}$ with $\rho_{AB} = \sum_{i} p_i \ket{E_i} \bra{E_i}$ and $p_0 \ge p_*$ are CP-local passive with respect to A, with the threshold ground state population bounded from above by 
		\begin{align}\label{eq:pbound-2}
		p_*  \le  \left( 1 + \frac{E_1   (q^{AB}_{0,\min})^2}{\max_{i\ge 1} \left[ E_i  (q^{AB}_{i,\max})^2 \right]} \right)^{-1} . 
		\end{align}
		$\{  q^{AB}_{i,\alpha} \}_{\alpha=0}^{d_A-1} $ denotes the Schmidt coefficients of $\ket{E_i}$ and $q^{AB}_{i,\min} \equiv \min_{\alpha} \left[ q^{AB}_{i,\alpha} \right]$, $q^{AB}_{i,\max} \equiv \max_{\alpha} \left[ q^{AB}_{i,\alpha} \right]$.
	\end{customthm}
	
	\begin{proof}
		We will show that, as long as the ground-state population $p_0$ of the steady state $\rho_{AB}$ exceeds either of the bounds of Eq. \eqref{eq:pbound-2}, any solution to the optimization problem of Definition 1 in the main text yields a non-negative value for the optimal locally-extractable energy.
		
		
		Let  $\Delta p_i \equiv p'_i - p_i$ denote the change of populations under the action of the local quantum channel, i.e.,  $p'_i \equiv  \braket{E_i | \mathcal E_A \otimes \mathcal I_B \left( \rho \right) | E_i} $. The condition for CP-local passivity of Definition 1  then translates to
		\begin{align}
		\Delta E_{(A)B}^{\mathcal{E}_A}=\sum_i E_i \Delta p_{i} \ge 0  \quad \, \forall \, \, \mathcal{E}_A \,\;. \label{SLP_sufficient_Condition}
		\end{align}
		We define the matrix $S$ with elements
		\begin{align}
		S_{ij} \equiv \braket{E_i | \mathcal E_A \otimes \mathcal I_B \left(\ket{E_j}\bra{E_j} \right) | E_i}
		\end{align}
		such that $p_i' = \sum_j S_{ij} p_j$. Since $\mathcal E_A$ is a quantum channel, the matrix $S$ is stochastic, i.e., $\sum_i S_{ij} = 1$ for all $i$.
		Eq.~\eqref{SLP_sufficient_Condition} can be rewritten in terms of the $S$ matrix as
		\begin{align}
		\sum_{i,j} E_i \left( S_{ij} - \delta_{ij} \right) p_j \ge 0 \,\;. \label{sufficient_S_matrix}
		\end{align}
		A sufficient condition for Eq.~\eqref{sufficient_S_matrix} to hold is
		\begin{align}
		E_1 \left( 1 - S_{00} \right) p_0 \ge \sum_{i \ge 1} E_i \left(1 - S_{ii}  \right) p_i  \,\;. \label{sufficient_S_matrix_2}
		\end{align}
		The above inequality essentially compares the energy difference resulting from two kinds of population transitions: (a) populations leaving the ground state and residing in the first excited state (LHS), and (b) populations leaving all the excited states and residing in the ground state (RHS). Eq. \eqref{sufficient_S_matrix_2} indeed implies CP-local passivity:
		\begin{align*}
		E_1 \left( 1 - S_{00} \right) p_0 - \sum_{i \ge 1} E_i \left(1 - S_{ii}  \right) p_i &\ge 0 \\
		\Longrightarrow  \quad E_1 \sum_{i\ge 1} S_{i0} p_0 -  \sum_{i \ge 0} E_i \left(1 - S_{ii}  \right) p_i &\ge 0 \\
		\Longrightarrow  \quad \sum_{i\ge 1} E_i S_{i0} p_0 +  \sum_{i \ge 0} E_i \left(S_{ii} - 1  \right) p_i &\ge 0 \\
		\Longrightarrow  \quad \sum_{\substack{i,j\ge 0 \\ i\ne j}} E_i S_{ij} p_j +  \sum_{i \ge 0} E_i \left(S_{ii} - 1  \right) p_i &\ge 0 \\
		\Longrightarrow \quad  \eqref{SLP_sufficient_Condition} \quad \Longrightarrow \quad  \text{CP-local passivity}& 
		\end{align*}

		Eq. \eqref{sufficient_S_matrix_2} implicitly depends on the local quantum channel $\mathcal E_A$ through the matrix $S$. We proceed by formulating an $\mathcal E_A$-independent  sufficient condition for the above equation to hold.  The diagonal elements of $S$ can be calculated in terms of the Schmidt decomposition
		\begin{align}
		\ket{E_i} = \sum_{\alpha} \sqrt{q^{AB}_{i,\alpha}} \ket{\alpha _{(i)}}_A \ket{\tilde{\alpha} _{(i)}}_B \,\;.
		\end{align}
		A simple calculation gives
		\begin{align}
		S_{ii} &= \sum_{\alpha,\beta} q^{AB}_{i,\alpha}q^{AB}_{i,\beta} \braket{\alpha_{(i)} |  \mathcal E _A \left( \ket{\alpha_{(i)}} \bra{\beta_{(i)}} \right) |\beta_{(i)} }
		\end{align}
		and hence
		\begin{align}
		1 - S_{ii} &= \sum_{\alpha,\beta} q^{AB}_{i,\alpha}q^{AB}_{i,\beta} \braket{\alpha_{(i)} |  \left(  \mathcal I_A - \mathcal E_A \right) \left( \ket{\alpha_{(i)}} \bra{\beta_{(i)}} \right) |\beta_{(i)} } = \sum_{\alpha,\beta} q^{AB}_{i,\alpha}q^{AB}_{i,\beta} \, r_{i}(\alpha,\beta) \,\;,
		\end{align}
		where
		\begin{align}
		r_{i}(\alpha,\beta) \equiv \braket{\alpha_{(i)} |  \left(  \mathcal I_A - \mathcal E_A \right) \left( \ket{\alpha_{(i)}} \bra{\beta_{(i)}} \right) |\beta_{(i)} } \,\;.
		\end{align}
		Now notice that 
		\begin{align*}
		1 - S_{ii} = \sum_{\alpha,\beta} q^{AB}_{i,\alpha}q^{AB}_{i,\beta} \,\text{Re} \left[ r_{i}(\alpha,\beta) \right] \,\;,
		\end{align*}
		with each $\text{Re} \left[ r_i(\alpha,\beta) \right] \ge 0 $, as demonstrated by the following sequence of inequalities:
		\begin{align*}
		\text{Re} \left[ r_i(\alpha,\beta) \right] &\ge 1 - \left| \braket{\alpha_{(i)}| \mathcal E_A \left( \ket{\alpha_{(i)}} \bra{\beta_{(i)}} \right) | \beta_{(i)}} \right| = 1 - \left| \braket{\,\ket{\alpha_{(i)}}\bra{\beta_{(i)}},\mathcal E_A \left( \ket{\alpha_{(i)}} \bra{\beta_{(i)}} \right)  \, } \right|\\
		&\ge 1 - \left\| \mathcal E_A \left( \ket{\alpha_{(i)}} \bra{\beta_{(i)}} \right) \right\|_2 \ge  1 - \left\| \mathcal E_A  \right\|_{2,1} \ge 1 - \left\| \mathcal E \right\|_{1,1} =0 \,\;,
		\end{align*}
		where the Cauchy-Schwarz inequality was used and the fact that $\mathcal E_A$ is a CPTP map and therefore has a $1-1$ norm $\left\| \mathcal E \right\|_{1,1} = 1$. 
		As a result, we can now bound
		\begin{align}
		(q^{AB}_{i,\min})^2 R_i \le 1 - S_{ii} \le (q^{AB}_{i,\max})^2 R_i \label{S_bounds}
		\end{align}
		where we set
		\begin{align}
		R_i \equiv \sum_{\alpha,\beta} r_i(\alpha,\beta) \,\;.
		\end{align}
		Notice that each $R_i$ is just the (Hilbert-Schmidt) trace, which we define as $ R_i = \Tr \left[ \mathcal I_A - \mathcal E_A  \right] 	$, evaluated with respect to the orthonormal operator  basis $B_i \equiv \left\{ \ket{\alpha_{(i)}}\bra{\beta_{(i)}} \right\}_{\alpha,\beta}$. However, since the value of the trace is independent of the (orthonormal) basis of evaluation $B_i$, we conclude $R \equiv R_i$ does not depend on the index $i$. As a result,
		\begin{align}
		E_1 (q^{AB}_{0,\min})^2 p_0 \ge \sum_{i \ge 1} E_i (q^{AB}_{i,\max})^2 p_i \; \xRightarrow{\eqref{S_bounds}} \; \eqref{sufficient_S_matrix_2} \; \Longrightarrow \text{CP-local passivity} \label{eq:necessary_probabilities} \,\;.
		\end{align}
		Finally, from Eq. \eqref{eq:necessary_probabilities} we can read the desired bound for the threshold ground state population
		\begin{align}
		p_* \le  \left( 1 + \frac{E_1  (q^{AB}_{0,\min})^2}{\max_{i\ge 1} \left[ E_i (q^{AB}_{i,\max})^2 \right]} \right)^{-1} \,\;,
		\end{align}
		since $\sum_{i \ge 1} E_i (q^{AB}_{i,\max})^2 p_i  \le  (1 - p_0) \max_{i\ge 1} \left[ E_i (q^{AB}_{i,\max})^2 \right]$. 
	\end{proof}
	
	\section{Proof of Theorem 3 }
	\label{app:suffcorr}
	
	Before deriving the main result, we need two preliminary Lemmas (Lemmas \ref{le:energy_diff_prop} and \ref{le:norms}). First, Lemma \ref{le:energy_diff_prop} is based on a result of \cite{brandao2016finite} (Lemma \ref{th:locglob}) and it shows that a change in energy of the local Hamiltonian is well approximated by a change of energy of the global Hamiltonian. It relies on the following definition:
	\begin{customdef}{2}[Clustering of correlations]
		We say that the state $\rho$ on a lattice system has $\epsilon(l)$-clustering of correlations if
		\begin{equation}
		\max_{M,N} \left| \tr[M\otimes N \rho]-\tr[M\rho]\tr[N\rho] \right| \le \vert\vert M \vert \vert \, \vert\vert N \vert\vert \, \epsilon(l),
		\end{equation}
		where $M$ has support of region $A$ and $N$ on region $B$, and $l\le \dist (A,B)$.
	\end{customdef}
	
	The relevance of this definition lies in the fact that in many systems of interest $\epsilon(l)$ will be a decaying exponential. 
	If this decay is fast, the following result shows that marginals of thermal states when tracing out a big region can be well-approximated by the marginal of the thermal state of a much smaller lattice.
	
	\begin{lemma}\label{th:locglob} [Theorem 4,  \cite{brandao2016finite}]
		Let H be a local bounded Hamiltonian, $\beta$ an inverse temperature and $\tau_{AB}=e^{-\beta H}/\tr[e^{-\beta H}]$. Let $AB_1B_2$ be a separation of the lattice such that $B_1$ shields $A$ from $B_2$ by a distance of at least $l$. Let $\tau^{AB_1}$ be the Gibbs state on region $AB_1$ only. If the system is $\epsilon(l)$-clustering, then
		\begin{equation}
		\vert\vert \tr_{B}[\tau_{AB}]-\tr_{B_1}[\tau_{AB_1}] \vert \vert_1 \le K |\partial B_2|(\epsilon(l/2)+c_1 e^{-c_2 l}),
		\end{equation}
		where $K>0$ and $c_1,c_2 > 0$ are constant and $|\partial B_2|$ is the size of the boundary between $B_1$ and $B_2$.
	\end{lemma}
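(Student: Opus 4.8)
\emph{Plan.} This is Theorem~4 of \cite{brandao2016finite}, and I would reconstruct its proof by an interpolation argument that gradually switches on the interaction across the shielding region. Using the decomposition of Eq.~\eqref{eq:hamAB12}, write $H_{AB}=H_{AB_1}+V_{B_1B_2}+H_{B_2}$ and define the one-parameter family $H(s)=H_{AB_1}+H_{B_2}+s\,V_{B_1B_2}$ for $s\in[0,1]$, with Gibbs state $\tau(s)=e^{-\beta H(s)}/\tr[e^{-\beta H(s)}]$. At $s=0$ the two blocks decouple, $\tau(0)=\tau_{AB_1}\otimes\tau_{B_2}$, so that $\tr_B[\tau(0)]=\tr_{B_1}[\tau_{AB_1}]$, while $\tau(1)=\tau_{AB}$ by construction. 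The marginal difference to be bounded is therefore the path integral
\begin{equation}
\tr_B[\tau_{AB}]-\tr_{B_1}[\tau_{AB_1}]=\int_0^1 \tr_B\!\left[\frac{d}{ds}\tau(s)\right]ds .
\end{equation}

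\emph{Duhamel step.} Differentiating the Gibbs state through Duhamel's formula and accounting for the derivative of the partition function yields, for any observable $M$ supported on $A$ with $\|M\|\le 1$, a Kubo--Mori (imaginary-time) covariance,
\begin{equation}
\tr\!\left[M\,\frac{d}{ds}\tau(s)\right]=-\beta\int_0^1\Big(\tr[M_u\,V_{B_1B_2}\,\tau(s)]-\tr[M\,\tau(s)]\,\tr[V_{B_1B_2}\,\tau(s)]\Big)\,du ,
\end{equation}
where $M_u=e^{u\beta H(s)}M\,e^{-u\beta H(s)}$ is the imaginary-time evolved observable. Since $\|\cdot\|_1=\sup_{\|M\|\le1}|\tr[M\,\cdot\,]|$, bounding the right-hand side uniformly in such $M$, and then integrating over $s$ and $u$ (which contribute only bounded prefactors absorbed into $K$), will deliver the stated trace-norm estimate.

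\emph{Boundary counting and clustering.} I would next write $V_{B_1B_2}=\sum_{x\in\partial B_2}h_x$ as a sum of $O(|\partial B_2|)$ bounded local terms, so that the covariance splits into $|\partial B_2|$ connected correlations, each between $M_u$ (localized near $A$) and a single $h_x$ (localized on the cut at distance $\ge l$ from $A$). Were $M_u$ exactly supported within distance $l/2$ of $A$, the $\epsilon(l)$-clustering hypothesis, applied at the residual separation $l/2$, would bound each such covariance by $\|h_x\|\,\epsilon(l/2)$, giving the leading $|\partial B_2|\,\epsilon(l/2)$ contribution and explaining the halved argument in the statement.

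\emph{Main obstacle.} The genuine difficulty, and the origin of the residual $c_1 e^{-c_2 l}$ term, is that $M_u$ is \emph{not} strictly local: imaginary-time conjugation spreads the support of $M$ across the lattice. The core technical step is therefore to establish quasi-locality of $M_u$, i.e.\ to approximate it by an operator $\tilde M_u$ supported within distance $l/2$ of $A$ with an exponentially small error $\|M_u-\tilde M_u\|\le c_1 e^{-c_2 l}$, uniformly in $s$ and $u$. This rests on Araki's analytic estimates (expansionals) for $e^{-u\beta H(s)}$ with short-range $H(s)$; applying clustering to $\tilde M_u$ and controlling the leftover by $\|M_u-\tilde M_u\|\,\|h_x\|$ produces the $\epsilon(l/2)+c_1 e^{-c_2 l}$ factor per boundary term. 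A secondary point requiring care is that the covariances are evaluated at $\tau(s)$ rather than $\tau_{AB}$, so one must ensure the clustering and quasi-locality inputs hold along the whole interpolation path; this is where the stronger structural estimates of \cite{brandao2016finite} enter. Summing over $\partial B_2$, integrating over $s,u$, and taking the supremum over $M$ then assembles the claimed bound, with $K$ collecting $\beta$, the interpolation integrals, and $\max_x\|h_x\|$.
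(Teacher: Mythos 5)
First, note the benchmark here: the paper does \emph{not} prove this lemma. It is imported verbatim as Theorem~4 of \cite{brandao2016finite}, with the appendix only remarking that the proof ``relies on the idea of \emph{quantum belief propagation} from \cite{hastings2007quantum}, from which the function $\gamma(l)\equiv c_1 e^{-c_2 l}$ arises.'' Measured against that cited proof, your skeleton is the right one: interpolating $H(s)=H_{AB_1}+H_{B_2}+s\,V_{B_1B_2}$ so that $\tau(0)=\tau_{AB_1}\otimes\tau_{B_2}$ reproduces one marginal and $\tau(1)=\tau_{AB}$ the other, writing the derivative as a connected correlation between $M$ on $A$ and an operator at the $B_1B_2$ cut, decomposing $V_{B_1B_2}$ into $O(|\partial B_2|)$ bounded terms, applying clustering at the residual distance $l/2$, and attributing $c_1e^{-c_2l}$ to a quasi-locality truncation — all of this matches the architecture of the Brand\~ao--Kastoryano argument.

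The genuine gap is in the core technical step. You produce the quasi-local operator by imaginary-time conjugation ($M_u=e^{u\beta H(s)}M e^{-u\beta H(s)}$, equivalently evolving $V_{B_1B_2}$) and invoke Araki's expansional estimates to truncate it. Araki's analyticity and quasi-locality of imaginary-time dynamics is a strictly \emph{one-dimensional} result; in $D>1$ the operator $e^{u\beta H}Me^{-u\beta H}$ has no exponential-tail locality bound in general (this is tied to the possibility of finite-temperature phase transitions), and the lemma is stated and used for $D$-dimensional lattices — the paper explicitly invokes $|\partial B_2|\sim l^{D-1}$. So your route fails exactly in the regime where the lemma is needed. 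The actual proof sidesteps imaginary time entirely via Hastings' belief propagation: $\frac{d}{ds}e^{-\beta H(s)}=-\frac{\beta}{2}\bigl\{\Phi_s(V_{B_1B_2}),\,e^{-\beta H(s)}\bigr\}$, where $\Phi_s(V)=\int dt\, f_\beta(t)\, e^{iH(s)t}\,V\,e^{-iH(s)t}$ is a filtered \emph{real}-time evolution; its quasi-locality with exponentially decaying tails follows from Lieb--Robinson bounds in \emph{any} dimension, truncation to within $l/2$ of the cut yields the $c_1e^{-c_2 l}$ term, and clustering supplies $\epsilon(l/2)$. Finally, the caveat you flagged — that the covariance is evaluated at $\tau(s)$, not $\tau_{AB}$ — is substantive, not cosmetic: \cite{brandao2016finite} resolve it by \emph{assuming} uniform clustering along the interpolation (for Gibbs states of the deformed Hamiltonians), rather than deriving it from clustering of $\tau_{AB}$ alone; as quoted in the paper's lemma this hypothesis is glossed over, and your sketch leaves it unresolved as well.
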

	The choice of regions of this lemma is shown for clarity in Fig. 1 in the main text. \noindent This lemma itself relies on the idea of \emph{quantum belief propagation} from \cite{hastings2007quantum}, from which the function $\gamma(l)\equiv c_1 e^{- c_2 l}$ arises. Notice that the boundary will grow polynomially in $l$ for lattice dimension $D>1$.
	
	\noindent We subsequently use this to prove that we can estimate well an energy change in the thermal state of the whole Hamiltonian from the energy change of a thermal state corresponding to smaller part of the system.
	
	\begin{lemma} \label{le:energy_diff_prop}
		Let $AB_1B_2$ be regions in the lattice as defined as in the Lemma  \ref{th:locglob} above, and let $H_{AB}$ be the total Hamiltonian, which we can decompose as
		\begin{equation}
		H_{AB}=H_A+V_{AB_1}+H_{B_1}+V_{B_1B_2}+H_{B_2}
		\end{equation}
		Moreover, let
		$\mathcal{E}_A \otimes \mathcal{I} $ be a CPTP map that acts inside region $S\equiv A \setminus \partial A$ (that is, outside the support of $V_{AB_1}$). Then 
		\begin{equation}
		|\Delta E^{\mathcal{E}_A}_{(A)B} -\Delta^{\mathcal{E}_A} E_{(A)B_1} | \le  \vert\vert H_A \vert\vert\, \vert\vert \mathcal{E}_A-\mathcal{I} \vert\vert_{1,1} \, K |\partial C|(\epsilon(l/2)+\gamma(l/2)),
		\end{equation}	
		where
		\begin{align}
		\Delta E^{\mathcal{E}_A}_{(A)B}&=\tr[H_{AB} (\mathcal{E}_A \otimes \mathcal{I}_{\setminus A} -\mathcal{I}_{AB}) (\tau_{AB})]  \\
		\Delta E^{\mathcal{E}_A}_{(A)B_1}&=\tr[H_{AB_1} (\mathcal{E}_A \otimes \mathcal{I}_{\setminus A} -\mathcal{I}_{AB}) (\tau_{AB_1})]
		\end{align}	
	\end{lemma}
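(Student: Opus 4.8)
The plan is to exploit the locality of the map. Since $\mathcal{E}_A$ acts nontrivially only inside $S$ and is trace preserving, its Heisenberg-picture adjoint $\mathcal{E}_A^*$ is unital and acts as the identity on every operator supported away from $S$. Concretely, for any Hamiltonian term $h$ whose support is disjoint from $S$ one has $\tr[h\,(\mathcal{E}_A\otimes\mathcal{I})(\rho)]=\tr[(\mathcal{E}_A^*\otimes\mathcal{I})(h)\,\rho]=\tr[h\rho]$, so such terms drop out of the difference $\mathcal{E}_A\otimes\mathcal{I}-\mathcal{I}$. First I would decompose $H_{AB}$ as in the statement and check, region by region, which terms overlap $S$. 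Because $S=A\setminus\partial A$ sits strictly inside $A$ and does not meet the interaction $V_{AB_1}$, the only surviving term is $H_A$; the terms $V_{AB_1},H_{B_1},V_{B_1B_2},H_{B_2}$ all contribute zero. The identical bookkeeping applies to $H_{AB_1}=H_A+V_{AB_1}+H_{B_1}$.

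This collapses both energy differences to a single, purely local expression. Writing $\rho_A^{AB}\equiv\tr_B[\tau_{AB}]$ and $\rho_A^{AB_1}\equiv\tr_{B_1}[\tau_{AB_1}]$ for the two marginals on $A$, and using $\tr_B\circ(\mathcal{E}_A\otimes\mathcal{I}_B)=\mathcal{E}_A\circ\tr_B$ together with the fact that $H_A$ is supported on $A$, I would reduce
\[
\Delta E^{\mathcal{E}_A}_{(A)B}=\tr_A[H_A(\mathcal{E}_A-\mathcal{I}_A)(\rho_A^{AB})],\qquad
\Delta E^{\mathcal{E}_A}_{(A)B_1}=\tr_A[H_A(\mathcal{E}_A-\mathcal{I}_A)(\rho_A^{AB_1})].
\]
Subtracting, linearity isolates the difference of marginals:
\[
\Delta E^{\mathcal{E}_A}_{(A)B}-\Delta E^{\mathcal{E}_A}_{(A)B_1}
=\tr_A\!\big[H_A\,(\mathcal{E}_A-\mathcal{I}_A)(\rho_A^{AB}-\rho_A^{AB_1})\big].
\]

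The final step is a standard norm estimate. By Hölder's inequality $|\tr_A[H_A X]|\le\|H_A\|\,\|X\|_1$, and by submultiplicativity of the induced trace norm $\|(\mathcal{E}_A-\mathcal{I}_A)(Y)\|_1\le\|\mathcal{E}_A-\mathcal{I}\|_{1,1}\,\|Y\|_1$. Taking $Y=\rho_A^{AB}-\rho_A^{AB_1}$ and bounding $\|\rho_A^{AB}-\rho_A^{AB_1}\|_1$ by Lemma~\ref{th:locglob}, which gives exactly $\|\tr_B[\tau_{AB}]-\tr_{B_1}[\tau_{AB_1}]\|_1\le K|\partial B_2|(\epsilon(l/2)+c_1 e^{-c_2 l})$, yields the claimed inequality (the relabeling to $\gamma(l/2)$ only weakens the bound, since $\gamma$ is decreasing).

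I expect the only delicate point to be the second step: rigorously certifying that every Hamiltonian term avoiding $S$ contributes nothing. This rests on two facts that must be stated carefully, namely that $\mathcal{E}_A=\mathcal{E}_S\otimes\mathcal{I}_{\partial A}$ genuinely acts as the identity on $\partial A$ (so that $V_{AB_1}$ is untouched), and that trace preservation of $\mathcal{E}_S$ makes $\mathcal{E}_S^*$ unital, which is what annihilates the off-$S$ terms. Everything after that is routine operator-norm manipulation combined with the already-established Lemma~\ref{th:locglob}.
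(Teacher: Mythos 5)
Your proposal is correct and follows essentially the same route as the paper's proof: reduce both energy differences to $\tr_A[H_A(\mathcal{E}_A-\mathcal{I}_A)(\cdot)]$ acting on the respective marginals, then apply H\"older's inequality, the induced $1\!-\!1$ norm, and Lemma~\ref{th:locglob}. The only difference is that you spell out, via unitality of the adjoint $\mathcal{E}_S^*$, why every Hamiltonian term supported away from $S$ drops out --- a step the paper compresses into ``because of where the local map acts'' --- which is a welcome clarification rather than a deviation.
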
		
	\begin{proof}
		Because of where the local map acts, we can write 
		\begin{align}
		\Delta E^{\mathcal{E}_A}_{(A)B}&=\tr[H_{AB} (\mathcal{E}_A \otimes \mathcal{I}_{\setminus A} -\mathcal{I}_{AB}) (\tau_{AB})]=\tr[H_A (\mathcal{E}_A \otimes \mathcal{I}_{\setminus A} -\mathcal{I}_{AB}) (\tr_{B}[\tau_{AB}])]\\
		\Delta E^{\mathcal{E}_A}_{(A)B_1}&=\tr[ H_{AB_1} (\mathcal{E}_A \otimes \mathcal{I}_{\setminus A} -\mathcal{I}_{AB}) (\tau_{AB_1})]=\tr[H_{A} (\mathcal{E}_A \otimes \mathcal{I}_{\setminus A} -\mathcal{I}_{AB}) (\tr_{B_1}[\tau_{AB_1}])].
		\end{align}
		
		Thus we have, using the definition of the norms and Theorem \ref{th:locglob},
		\begin{align}
		|	\Delta E^{\mathcal{E}_A}_{(A)B} -\Delta E^{\mathcal{E}_A}_{(A)B_1} | &= | \tr[H_A(\mathcal{E}_A-\mathcal{I}_A)(\tr_{B}[\tau_{AB}]-\tr_{B_1}[\tau_{AB_1}])|\\
		&\le \vert\vert H_A \vert\vert \, \vert\vert \mathcal{E}_A-\mathcal{I}_A \vert\vert _{1,1}  \, \vert \vert \tr_{B}[\tau_{AB}]-\tr_{B_1}[\tau_{AB_1}] \vert\vert_1 \\
		&\le \vert\vert H_A \vert\vert \,  \vert\vert \mathcal{E}_A-\mathcal{I}_A \vert\vert_{1,1} \, K |\partial B_2|(\epsilon(l/2)+\gamma(l/2)).
		\end{align}
	\end{proof}
	
	We will also use the following technical lemma, which relates different quantities that measure how far a channel is form the identity channel. One is the trace $	\Tr \left(  \mathcal I - \mathcal E  \right)=\sum_i \braket{\alpha_{(i)} |  \left(  \mathcal I - \mathcal E \right) \left( \ket{\alpha_{(i)}} \bra{\beta_{(i)}} \right) |\beta_{(i)} } $, with $\{ \ket{\alpha_{(i)}} \bra{\beta_{(i)}}\}$ a complete basis of the Hilbert space $\mathcal H \cong \mathbb{C}^{d\times d}$, and the other is the $1-1$ norm $\left\| \mathcal I - \mathcal E \right\|_{1,1}$.
	\begin{lemma} \label{le:norms}
		Let $\mathcal E$ be a CPTP map acting on states of  $\mathcal H \cong \mathbb{C}^d$. Then,
		\begin{align}
		\Tr \left(  \mathcal I - \mathcal E  \right)  \ge \frac{1}{d^2} \left\| \mathcal I - \mathcal E \right\|_{1,1}.
		\end{align}
	\end{lemma}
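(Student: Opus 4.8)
The plan is to bound the induced trace norm $\lVert \mathcal{I}-\mathcal{E}\rVert_{1,1}$ from above by a multiple of the superoperator trace, after which the claimed inequality follows upon dividing. The first step is to exploit that $\mathcal{I}-\mathcal{E}$ is Hermiticity preserving: its induced $1\!\to\!1$ norm is attained on a Hermitian input, and since the extreme points of the Hermitian trace-norm ball are $\pm\ketbra{\psi}{\psi}$ while $X\mapsto\lVert(\mathcal{I}-\mathcal{E})(X)\rVert_1$ is convex, the maximum is reached on a pure state. Thus $\lVert\mathcal{I}-\mathcal{E}\rVert_{1,1}=\max_{\psi}\lVert\, \ketbra{\psi}{\psi}-\mathcal{E}(\ketbra{\psi}{\psi})\rVert_1$, so it suffices to control a single pure-state deviation.

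Next I would convert this trace-norm deviation into the superoperator trace in three moves. Writing $\rho=\ketbra{\psi}{\psi}$ and $\sigma=\mathcal{E}(\rho)$, the Hermitian operator $\rho-\sigma$ acts on $\mathbb{C}^d$, so $\lVert\rho-\sigma\rVert_1\le\sqrt{d}\,\lVert\rho-\sigma\rVert_2$; expanding $\lVert\rho-\sigma\rVert_2^2=\Tr\rho^2-2\Tr(\rho\sigma)+\Tr\sigma^2$ and using $\Tr\rho^2=1$ together with $\Tr\sigma^2\le1$ gives $\lVert\rho-\sigma\rVert_2^2\le 2\,(1-\braket{\psi|\sigma|\psi})$. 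Finally I relate $1-\braket{\psi|\mathcal{E}(\rho)|\psi}$ to $\Tr(\mathcal{I}-\mathcal{E})$ by choosing an orthonormal basis $\{\ket{e_a}\}$ with $\ket{e_1}=\ket{\psi}$ and evaluating the (basis-independent) superoperator trace in the operator basis $\{\ketbra{e_a}{e_b}\}$: the diagonal $a=b=1$ term equals exactly $1-\braket{\psi|\mathcal{E}(\rho)|\psi}$, and by the positivity argument already used in the proof of Theorem~2 every term has non-negative real part, so discarding all other terms yields $1-\braket{\psi|\mathcal{E}(\rho)|\psi}\le\Tr(\mathcal{I}-\mathcal{E})$. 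Chaining the three estimates produces $\lVert\mathcal{I}-\mathcal{E}\rVert_{1,1}^2\le 2d\,\Tr(\mathcal{I}-\mathcal{E})$.

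The hard part will be the very last conversion, because this natural route delivers only a \emph{quadratic} relation, $\Tr(\mathcal{I}-\mathcal{E})\ge\tfrac{1}{2d}\lVert\mathcal{I}-\mathcal{E}\rVert_{1,1}^2$, whereas the lemma asserts the stronger \emph{linear} bound $\Tr(\mathcal{I}-\mathcal{E})\ge\tfrac{1}{d^2}\lVert\mathcal{I}-\mathcal{E}\rVert_{1,1}$. The gap is not an artefact of the estimates but reflects genuine behaviour near the identity: for $\mathcal{E}(\cdot)=U\cdot U^\dagger$ with $U=e^{i\varepsilon H}$ one computes $\Tr(\mathcal{I}-\mathcal{E})=d^2-|\Tr U|^2=O(\varepsilon^2)$ while $\lVert\mathcal{I}-\mathcal{E}\rVert_{1,1}=O(\varepsilon)$, so the trace is second order in the generator and the induced norm is only first order. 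To recover the stated linear inequality I would therefore have to treat the regime $\lVert\mathcal{I}-\mathcal{E}\rVert_{1,1}\gtrsim 1/d$ separately (there the trivial bound $\lVert\mathcal{I}-\mathcal{E}\rVert_{1,1}\le 2$ downgrades the quadratic estimate to a linear one), and I expect the complementary near-identity regime to be the genuine obstacle; a robust alternative is to carry the quadratic form $\lVert\mathcal{I}-\mathcal{E}\rVert_{1,1}^2\le 2d\,\Tr(\mathcal{I}-\mathcal{E})$ through to the proof of Theorem~3 rather than to insist on the linear constant $1/d^2$.
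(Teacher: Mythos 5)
Your proposal does not prove the lemma, but that is not a defect of your strategy: your closing observation is correct, and the lemma \emph{as stated is false}. Your near-identity unitary counterexample checks out. For $\mathcal E(\cdot)=U\cdot U^\dagger$ the superoperator trace, evaluated in a product basis $\{\ket{a}\bra{b}\}$, is $\Tr(\mathcal E)=\sum_{a,b}\braket{a|U|a}\braket{b|U^\dagger|b}=|\Tr U|^2$. Taking $d=2$ and $U=\diag(e^{i\varepsilon},e^{-i\varepsilon})$ gives
\begin{equation*}
\Tr(\mathcal I-\mathcal E)=4-|\Tr U|^2=4\sin^2\varepsilon,
\qquad
\left\| \mathcal I-\mathcal E \right\|_{1,1}\ \ge\ \left\| (\mathcal I-\mathcal E)\bigl(\ket{+}\bra{+}\bigr) \right\|_1
=2\sqrt{1-|\braket{+|U|+}|^2}=2\,|\sin\varepsilon|,
\end{equation*}
so the claimed inequality would force $|\sin\varepsilon|\ge 1/8$, which fails for all small $\varepsilon\neq 0$. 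Consistently with this, the paper's own proof contains an invalid step: since $d^2\braket{\alpha|\mathcal F(\ket{\alpha}\bra{\alpha})|\alpha}=\sum_{i,j,k,l}\braket{i|\mathcal F(\ket{j}\bra{k})|l}$ holds \emph{exactly}, the asserted bound $\sum_{i,j,k,l}|\braket{i|\mathcal F(\ket{j}\bra{k})|l}|\le d^2\,|\braket{\alpha|\mathcal F(\ket{\alpha}\bra{\alpha})|\alpha}|$ is the triangle inequality written backwards; it only holds when all terms share a common phase. Your structural diagnosis---that $\Tr(\mathcal I-\mathcal E)$ is second order in the generator near the identity while the induced norm is first order---is exactly why no adjustment of the constant $1/d^2$ can rescue a linear bound.

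Two caveats on your positive (quadratic) result. First, your opening step, that the induced $1\to 1$ norm of a Hermiticity-preserving map is attained on a Hermitian (hence pure-state) input, is not true in general: the map $\Phi(X)=\ket{0}\bra{0}\Tr(X\sigma_x)+\ket{1}\bra{1}\Tr(X\sigma_y)$ is Hermiticity preserving, its maximum over Hermitian inputs of unit trace norm is $\sqrt2$, yet $\|\Phi(\ket{0}\bra{1})\|_1=2$. Whether equality holds for the special case $\mathcal I-\mathcal E$ with $\mathcal E$ CPTP would need a separate argument; the safe repair is to split $X=H+iK$ into Hermitian parts, which costs only a factor of $2$, after which the rest of your chain is sound (the rank bound $\|\cdot\|_1\le\sqrt d\,\|\cdot\|_2$, the estimate $\|\rho-\sigma\|_2^2\le 2(1-\braket{\psi|\sigma|\psi})$, and the step $1-\braket{\psi|\mathcal E(\ket{\psi}\bra{\psi})|\psi}\le\Tr(\mathcal I-\mathcal E)$ via basis independence of the trace and nonnegativity of the real parts). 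This yields the correct substitute $\Tr(\mathcal I-\mathcal E)\ge\frac{1}{8d}\|\mathcal I-\mathcal E\|_{1,1}^2$. Second, your suggested fallback---carrying the quadratic bound through the proof of Theorem 3---does not work as a drop-in replacement: that proof invokes the lemma precisely so that the gain $\lambda(l)\Tr(\mathcal I_A-\mathcal E_A)$ dominates the truncation error $\lambda(l)\,d_A^{-2}\|\mathcal E_A-\mathcal I_A\|_{1,1}$, and for a weak unitary rotation on $A$ the gain is $O(\varepsilon^2)$ while the error term is $O(\varepsilon)$. Hence Theorem 3's proof inherits the defect of the lemma, and repairing it would require a truncation estimate that is itself second order in $\mathcal E_A-\mathcal I_A$ (or some other treatment of the near-identity regime), not merely a change of constants.
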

	\begin{proof}
		We can express the $1-1$ norm of a superoperator $\mathcal F \coloneqq \mathcal I - \mathcal E $  as
		\begin{align}
		\left\| \mathcal F \right\|_{1,1} = \sup_{\ket{u},\ket{v}} \left\|  \mathcal F  \left( \ket{u} \bra{v} \right) \, \right\|_1
		\end{align}
		where the supremum is taken over \textit{unit} vectors $\ket{u},\ket{v} \in \mathcal H $ (see, e.g., \cite{watrous2018theory}).
		We can further write 
		\begin{align}
		\left\| \mathcal F \right\|_{1,1} &= \sup_{U}\sup_{\ket{u},\ket{v}} \left| \Tr \left( U  \left[ \mathcal F  \left( \ket{u} \bra{v} \right)  \right] \right) \right| \\
		& = \left| \Tr \left( U_*  \left[ \mathcal F  \left( \ket{u_*} \bra{v_*} \right)  \right] \right) \right|
		\end{align}
		and using the eigenbasis of $U_*$ we can write
		\begin{align}
		U_* &= \sum_i \text{(phase)} \cdot  \ket{i} \bra{i} \\
		\ket{u_*} &= \sum_{j} \text{(phase)} \cdot \sqrt{p^{u}_j} \ket{j} \\
		\ket{v_*} &= \sum_{k} \text{(phase)} \cdot \sqrt{p^{v}_k} \ket{k} \,\;.
		\end{align} 
		Plugging-in the above and setting $\ket{\alpha} \coloneqq d^{-1/2} \sum_i \ket{i}$ we get
		\begin{align}
		\left\| \mathcal F \right\|_{1,1} &\le \sum_{i,j,k} \sqrt{p^{u}_j p^{v}_k} \left| \braket{i| \mathcal F  \left( \ket{j}\bra{k} \right) | i} \right| 
		\le  \sum_{i,j,k,l}  \left| \braket{i| \mathcal F  \left( \ket{j}\bra{k} \right) | l} \right| \\
		& \le d^2  \left| \braket{\alpha| \mathcal F  \left( \ket{\alpha}\bra{\alpha} \right) | \alpha} \right| = d^2 \braket{\alpha| \mathcal F  \left( \ket{\alpha}\bra{\alpha} \right) | \alpha} \le d^2 \Tr \left( \mathcal F \right) \,\;,
		\end{align}
		since the (superoperator) trace can be taken with respect to an orthonormal basis that includes the element $\ket{\alpha}\bra{\alpha}$ and we showed earlier that $\Re \left[ \braket{x| \mathcal F  \left( \ket{x}\bra{y} \right)|y} \right] \ge 0$ (for $\ket{x},\ket{y}$ elements of an orthonormal basis).
		
	\end{proof}
	We are now in a position to prove the central result of the section.
	
	\begin{customthm}{3} \label{th:suffcorr}
		Consider a Hamiltonian $H_{AB}$ as 	
		\begin{equation}\label{eq:hamAB12-2}
		H_{AB}=H_A+V_{AB_1}+H_{B_1}+V_{B_1B_2}+H_{B_2}.
		\end{equation}
		and let $\tau_{AB}^{\beta}=e^{-\beta H_{AB}}/Z_{AB}$ be its thermal state with $\epsilon(l)$-clustering of correlations. There exists a finite temperature $\beta_*$ such that all pairs $\{\tau_{AB}^{\beta},H_{AB}\}$ with $\beta \ge \beta_*$ are CP-local passive with respect to local operations on $S$ if the regions $B_1,B_2$ can be chosen such that
		\begin{align}\label{eq:suffineq-2}
		E_1^{AB_1} \left( q_{0,\min} ^{AB_1}\right)^2 > \lambda(l) \,\;,
		\end{align}
		where 
		\begin{align} 
		\lambda(l) = K d_A^2 \,\vert\vert H_A \vert\vert \, |\partial B_2| \, (\epsilon(l/2)+c_1 e^{-c_2 l}) \,\;. \label{lambda_expr-2}
		\end{align}
		Moreover, $\beta_*$ is such that 
		\begin{align}\label{eq:boundbeta-2}
		\tr[e^{-\beta_* H_{AB_1}}]^{-1} &\le \left(1+\frac{\lambda(l)}{\max_{i\ge 1} \left[ E^{AB_1}_i (q^{AB_1}_{i,\max})^2\right ] } \right) \nonumber \\ & \times  \left( 1 + \frac{E^{AB_1}_1  (q^{AB_1}_{0,\min})^2} {\max_{i\ge 1} \left[ E^{AB_1}_i (q^{AB_1}_{i,\max})^2 \right] }\right)^{-1}.
		\end{align}

		where $K,c_1,c_2 > 0$ are constants.
	\end{customthm}
	\begin{proof}
		
		Let us start by choosing a ground state population
		\begin{equation}\label{eq:p0cond}
		p_0=\left(1+\frac{\lambda(l)}{\max_{i\ge 1} \left[ E^{AB_1}_i (q^{AB_1}_{i,\max})^2\right ] } \right) \nonumber  \left( 1 + \frac{E^{AB_1}_1  (q^{AB_1}_{0,\min})^2} {\max_{i\ge 1} \left[ E^{AB_1}_i (q^{AB_1}_{i,\max})^2 \right] }\right)^{-1}.
		\end{equation}	
		This is equivalent to
		\begin{equation}
		E^{AB_1}_1  (q^{AB_1}_{0,\min})^2 p_0=(1-p_0)\max_{i\ge 1} \left[ E^{AB_1}_i (q^{AB_1}_{i,\max})^2\right ]+\lambda(l)
		\end{equation}
		Since $(1-p_0)\max_{i\ge 1} \left [ E^{AB_1}_i (q^{AB_1}_{i,\max})^2\right ] \ge \sum_{i \ge 1} E^{AB_1}_i (q^{AB_1}_{i,\max})^2 p_i$, it follows that
		\begin{equation}
		E^{AB_1}_1  (q^{AB_1}_{0,\min})^2 p_0 \ge \sum_{i \ge 1} E^{AB_1}_i (q^{AB_1}_{i,\max})^2 p_i + \lambda(l) \label{eq:ineqp2}
		\end{equation}
		Now, notice that from Eq. \eqref{eq:necessary_probabilities} in the proof of Theorem \ref{th:suff1}, the inequality of Eq. \eqref{eq:ineqp2} implies that 
		\begin{align}
		\Delta E^{\mathcal E_A}_{(A)B_1} \ge \lambda(l) \Tr (\mathcal I_A - \mathcal E_A) \,\;,
		\end{align}
		where the change of energy is due to the action of the local channel $\mathcal E_A= \mathcal E_S \otimes \mathcal I _{A/S}$. From Lemma \ref{le:energy_diff_prop} and $\lambda(l)$ as given by Eq~\eqref{lambda_expr-2}, it follows that 
		\begin{align}
		\Delta E^{\mathcal E_A} _{(A)B_1B_2} &\ge \Delta E^{\mathcal E_A} _{(A)B_1} - \frac{\lambda(l)}{d_A^2} \left\|  \mathcal E_A  - \mathcal I _A \right\|_{1,1} \\
		&\ge \lambda(l) \left[ \Tr \left( \mathcal I _A - \mathcal E _A \right) -  \frac{1}{d_A^2} \left\| \mathcal E_A  - \mathcal I _A \right\|_{1,1} \right] \ge 0 \,\;,
		\end{align}
		where for the last inequality we used Lemma \ref{le:norms}.
		
		With this we have shown that a ground state population $p_0$ on the local thermal state obeying Eq. \eqref{eq:p0cond} leads to CP-local passivity on the global thermal state on $AB_1B_2$ with the same temperature. This means that the threshold ground state population of $\tau^\beta_{AB_1}$ that we require is such that $p^* \le p_0$, which corresponds to a temperature $\beta^*$ that obeys Eq. \eqref{eq:boundbeta-2}, finishing the proof.
	\end{proof}	
	
\end{document}